\DeclareSymbolFont{tildelow}{TS1}{cmr}{m}{n}
\DeclareMathSymbol{\tildelow}{0}{tildelow}{126}
\newcommand{\ELL}{\mathcal{L}}
\newcommand{\ES}{\mathcal{S}}
\newtheorem{theorem}{Theorem}
\newtheorem{lemma}{Lemma}
\newtheorem{remark}{Remark}
\newtheorem{definition}{Definition}
\newtheorem{corollary}{Corollary}
\DeclareMathOperator*{\argmax}{arg\,max}
\algrenewcommand\ALG@beginalgorithmic{\small}
\begin{document}

\title{A Stable Polygamy Approach to \\Spectrum Access with Channel Reuse}

\author{
  {Dan Ben Ami, Kobi Cohen \textit{(Senior Member, IEEE)} }
	\thanks{
		D. Ben Ami and K. Cohen are with the School of Electrical and Computer Engineering, Ben-Gurion University of the Negev, Beer-Sheva, Israel (e-mail:danbenam@post.bgu.ac.il; yakovsec@bgu.ac.il).} 
\thanks{This research was supported by the ISRAEL SCIENCE FOUNDATION (grant No. 2640/20), and the Israel Innovation Authority.}
\thanks{This work has been submitted to the IEEE for possible publication. Copyright may be transferred without notice, after which this version may no longer be accessible.}
	\vspace{-0.75cm}
}

\maketitle
\pagenumbering{arabic}
\begin{abstract}
\label{sec:abstract}
We introduce a new and broader formulation of the stable marriage problem (SMP), called the stable polygamy problem (SPP), where multiple individuals from a larger group $\mathcal{L}$ of $L$ individuals can be matched with a single individual from a smaller group $\mathcal{S}$ of $S$ individuals. Each individual $\ell\in\mathcal{L}$ possesses a social constraint set $\mathcal{C}_{\ell}$ that contains all other individuals in $\mathcal{L}$ with whom they cannot coexist harmoniously. We define a generalized concept of stability based on the preference and constraints of the individuals. We explore two common settings: common utility, where the utility of a match is the same for individuals from both sets, and preference ranking, where each individual has a preference ranking for every other individual in the opposite set. Our analysis is conducted using a novel graph-theoretical framework.

The classic SMP has been investigated in recent years for spectrum access in wireless communications to match cells or users to channels, where only one-to-one matching is allowed. By contrast, the new SPP formulation allows us to solve more general models with channel reuse, where multiple users may access the same channel simultaneously. We show that the classic SMP is a special case of the formulated SPP. Interestingly, we show that classic algorithms, such as propose and reject (P\&R) to achieve stability, and Hungarian method to maximize matching utilities are no longer efficient in the general polygamy setting. 

We develop efficient algorithms to solve the SPP in polynomial time, tailored for implementations in spectrum access with channel reuse. We analytically show that our algorithm always solves the SPP with common utility. While the SPP with preference ranking cannot be solved by any algorithm in all cases, we prove that our algorithm effectively solves it in specific graph structures representing strong and weak interference regimes. Simulation results demonstrate the efficiency of our algorithms across various spectrum access scenarios with channel reuse. 
\end{abstract}

\begin{IEEEkeywords}
Spectrum access, stable marriage, stable matching, interference graph, channel reuse. \vspace*{-0.2cm}
\end{IEEEkeywords}
\section{Introduction}
\label{sec:introduction}

As demand for wireless connectivity grows, advancements in technology and spectrum management are crucial for optimizing network efficiency and reliability. Effective spectrum access management, which governs the allocation and utilization of radio frequencies, is essential to mitigate interference and enhance network performance. Channel reuse is a fundamental concept in wireless communications that aims to increase the utilization of available radio frequencies while reducing interference between neighboring cells or users. In cellular networks, for example, channel reuse patterns are carefully designed to ensure that adjacent cells can use the same frequency without causing unacceptable levels of interference. This approach allows network operators to efficiently serve a large number of users within a given geographical area.

\subsection{Solving the Spectrum Access Problem via Matching Algorithms}

Leveraging matching algorithms presents a promising solution to the spectrum access challenge. These algorithms enable the efficient pairing of cells (or users or devices) with available spectrum bands (channels), optimizing resource utilization.

In the classic stable marriage (or stable matching) problem (SMP), introduced by Gale and Shapley in 1962 \cite{gale1962college}, 
two equal-sized sets, $\mathcal{N}_1$ and $\mathcal{N}_2$, each containing $N$ individuals, need to be matched one-to-one. Each $n_1\in\mathcal{N}_1$ has a preference ranking over the individuals in $\mathcal{N}_2$, and each $n_2\in\mathcal{N}_2$ has a preference ranking over the individuals in $\mathcal{N}_1$. A matching is considered unstable if there exists an individual $n_1\in\mathcal{N}_1$ who prefers another individual $n_2\in\mathcal{N}_2$ over the individual to whom $n_1$ is currently matched, and $n_2$ also prefers $n_1$ over the individual to whom $n_2$ is currently matched. Gale and Shapley developed an algorithm to solve the SMP in polynomial time, and the time is linear with the input of the algorithm (commonly known as the propose and reject (P\&R) algorithm). 

An alternative modeling of the problem is to consider a utility matrix for the SMP problem, known as a common utility setting. In this setting, each matching $(n_1,n_2)$ for $n_1\in\mathcal{N}_1$ and $n_2\in\mathcal{N}_2$, has a utility value $u_{(n_1,n_2)}$. This formulation can be seen as a value assigned by an expert (e.g., a matchmaker) to grade a matching, or as a social welfare utility (i.e., the sum of the preference rankings for each match). This approach is also known as the assignment problem. Here, $n_1\in\mathcal{N}_1$ prefers $n_2\in\mathcal{N}_2$ over $n_2'\in\mathcal{N}_2$ if $u_{(n_1,n_2)}>u_{(n_1,n_2')}$, and similarly $n_2\in\mathcal{N}_2$ prefers $n_1\in\mathcal{N}_1$ over $n_1'\in\mathcal{N}_1$ if $u_{(n_1,n_2)}>u_{(n_1',n_2)}$. 

The SMP with common utility was proposed in \cite{leshem2011multichannel} to address the stable spectrum access problem, aiming to achieve one-to-one stable matching between users and channels, where the utility of a match $(n_1, n_2)$ could represent the rate a cell $n_1$ achieves when using channel $n_2$. The authors developed a distributed CSMA-type algorithm, which achieves the P\&R solution \cite{leshem2011multichannel}. While this approach does not optimize the sum matching utility, it offers a simple distributed implementation and is known to yield strong performance in multichannel wireless networks. Variations of this problem have been studied in \cite{xiao2016enhance, avner2016multi, hao2021adaptive, liu2022spectrum, gafni2022distributed} and subsequent studies.

The optimal one-to-one matching that maximizes the sum of the matching utilities can be obtained using the Hungarian method, which solves the problem in polynomial time (with a cubic complexity in $N$) \cite{kuhn1955hungarian}. This approach requires a centralized solution with access to the full utility matrix. It was proposed in \cite{naparstek2013fully} to address the spectrum access problem, aiming to achieve one-to-one maximal sum-utility matching between users and channels. The authors developed a distributed implementation of the algorithm using the auction algorithm. Variations of this problem have been explored in \cite{naparstek2012bounds, alam2022multi} and other subsequent studies.

The SMP problem and its application to spectrum access typically involve one-to-one matching between cells and channels. However, in this paper, we focus on spectrum access with channel reuse, where multiple cells can share the same frequency without causing unacceptable interference. A common application of this setting is spectrum access optimization in spatial networks, where cells are spatially distributed, and each cell is within the interference range of a few others. The spectrum access problem with channel reuse has been studied under various settings and objectives, often focusing on Nash equilibria for spectral allocation (see, e.g., our previous work \cite{cohen2017distributed} and references therein). This paper, however, introduces a novel approach by addressing stable spectrum allocation in this context for the first time, requiring a new and broader formulation of the SMP.

\subsection{Main Results}

Below, we summarize the main contributions of our work in detail.

\noindent
\textbf{1) A novel problem formulation: The Stable Polygamy Problem (SPP):} We introduce a new and expanded formulation of the SMP, termed the Stable Polygamy Problem (SPP). Unlike the classic SMP, which allows one-to-one matching between equal-sized sets $\mathcal{N}_1$ and $\mathcal{N}_2$, the SPP permits multiple individuals from a larger set $\mathcal{L}$ of $L$ individuals to be matched with a single individual from a smaller set $\mathcal{S}$ of $S$ individuals. Each individual $\ell \in \mathcal{L}$ has a social constraint set $\mathcal{C}_{\ell}$, consisting of other individuals from $\mathcal{L}$ with whom they cannot coexist harmoniously. We define a generalized concept of stability, taking into account the preferences of individuals with respect to the other set while adhering to these social constraints. The precise definition of stable polygamy is provided in Section \ref{sec:system}.

It is important to notice the differences of our formulation compared to well-known problems in the academic literature, the \emph{Stable Roommates Problem (SRP)} \cite{irving1985efficient}, and the \emph{generalized assignment problem (GAP)} \cite{ozbakir2010bees, cohen2006efficient}. In SRP and its variations, the goal is to find a stable matching within an even-sized set by pairing elements into disjoint pairs (i.e., roommates). A matching is considered stable if no two unpaired elements prefer each other over their assigned roommates. This differs from SMP and SPP, as SRP allows matches between any two elements within the set, rather than between distinct classes of individuals as in $\mathcal{L}$ and $\mathcal{S}$.

In GAP and its variations, there are two distinct classes similar to SPP: a smaller group of agents (analogous to $\mathcal{S}$) and a larger group of tasks (analogous to $\mathcal{L}$). In this problem, any agent can be assigned to perform any task, with varying costs and profits depending on the assignment. GAP allows many-to-one assignments like SPP, but the constraints differ. In GAP, each agent has a budget that limits the total cost of tasks they can be assigned. The goal is to find an assignment where all agents stay within their budgets while maximizing total profit. In contrast, SPP does not impose a budget constraint on the number of individuals from $\mathcal{L}$ that can be assigned to $\mathcal{S}$. Instead, the constraints are based on social limitations that restrict which individuals from $\mathcal{L}$ can coexist within the same allocation to an entity in $\mathcal{S}$. These can be seen as social constraints ensuring harmonious coexistence within the stable polygamy framework.

\noindent
\textbf{2) A new graph-theoretical analysis:} We introduce a novel graph-theoretical interpretation of the SPP using a vertex coloring-based formulation. Traditionally, matching problems are modeled using a weighted bipartite graph, where the sets $\mathcal{L}$ and $\mathcal{S}$ are represented by vertices, divided into two disjoint and independent sets. In contrast, our approach employs a vertex coloring-based formulation. Here, only the set $\mathcal{L}$ is represented by the vertices (i.e., cells or users) in the graph, while the set $\mathcal{S}$ is represented by the colors (channels) used to color these vertices. Social constraints are represented by the edges, preventing vertices from being matched with the same color. A common application for this setting is spectrum access optimization in spatial networks, where cells are spatially distributed, and each cell is in the interference range of a few other cells (see \cite{hou2013proportionally, cohen2017distributed} and subsequent studies). Here, each vertex-color pairing is evaluated based on utility or preference ranking between the vertex and color sets. A stable polygamy is thereby represented through a new formulation of stable vertex coloring, as introduced in this paper. Through this graph-theoretical analysis, we obtain that the classic SMP is a special case of the formulated SPP, where the graph is complete, and the sizes of $\mathcal{L}$ and $\mathcal{S}$ are equal.

\noindent
\textbf{3) Solving SPP with common utility for spectrum access with channel reuse:} As previously mentioned, the special case of SMP with common utility is also referred to as an assignment problem. In this context, the optimal one-to-one matching that maximizes the total matching utility can be found using the Hungarian method, which solves the problem centrally in polynomial time \cite{kuhn1955hungarian}. However, using our graph-theoretical analysis, we show that maximizing the total matching utility in the polygamy setting considered here is NP-hard. Consequently, the Hungarian method is efficient only in the specific case of SMP and cannot be generalized to the polygamy setting. Beyond the strong performance that stable matching is known to achieve in multichannel wireless networks, this result further highlights the rationale for applying SPP to spectrum access with channel reuse, as explored in this paper.

We develop a distributed stable spectrum access algorithm with channel reuse by solving the SPP, referred to as Distributed SPP for Spectrum Access and Reuse (DSSAR). DSSAR can be implemented using an opportunistic CSMA mechanism \cite{zhao2005opportunistic} or through a local control messages mechanism \cite{cohen2017distributed}. We analytically prove that DSSAR solves the SPP in polynomial time relative to the size of $\mathcal{L}$.

\noindent
\textbf{4) Solving SPP with preference ranking for spectrum access with channel reuse:} As previously mentioned, under preference ranking, Gale and Shapley developed the P\&R algorithm to solve the SMP in polynomial time \cite{gale1962college}. However, using our graph-theoretical analysis, we show that a solution to SPP under preference ranking does not always exist. Particularly, the P\&R algorithm highly depends on the special structure of the SMP and no longer efficient in solving the SPP. 

We introduce a new algorithm, termed Re-Propose and Reject (RP\&R), which allows for re-proposing matches. This algorithm can be implemented through a centralized mechanism with low complexity, making it well-suited for integration into 5G and beyond technologies with centralized node deployments. We prove analytically that RP\&R solves the SPP in specific graph structures that represent strong and weak interference regimes within wireless communication networks. For more general graph structures, we conducted extensive simulations using acyclic social constraint graphs. In all cases, the algorithm successfully achieved stable polygamy within $L$ iterations. 
%

\subsection{Other Related Work}

Another set of related work on multi-user channel allocation has approached it from the angle of game theoretic and congestion control (~\cite{han2005fair, menache2008rate, candogan2009competitive, menache2011network, law2012price, cohen2013game, wu2013fasa, singh2016combined, cohen2016distributedToN, cohen2017distributed, malachi2020queue} and references therein), hidden channel states\cite{yemini2019restless}, and classic graph coloring (\cite{wang2005list, wang2009improved,  checco2013learning, checco2014fast} and references therein). Here, the problem is fundamentally different from few aspects. First, traditional coloring might not be feasible here, since the number of channels might be small, and not all vertices might assigned with colors. Second, the objective of stable coloring in terms of preferences is a new concept. Game theoretic aspects of the problem have been investigated from both non-cooperative (i.e., each user aims at maximizing an individual utility) \cite{menache2008rate, candogan2009competitive, singh2016combined, cohen2016distributedToN, cao2018distributed}, and cooperative (i.e., each user aims at maximizing a system-wide global utility) \cite{han2005fair, leshem2006bargaining, cohen2017distributed, bistritz2018approximate} settings. 

Another set of related work consider the setting where the environment (e.g., the utility matrix) is unknown and need to be learned. These studies have been focused on one-to-one mapping. In \cite{bistritz2018distributed, bistritz2021game}, the authors developed a distributed algorithm to maximize network utility, addressing the one-to-one mapping problem in a distributed manner. In \cite{avner2016multi}, the SMP was addressed via multi-armed bandit (MAB) settings, modeling the channels as arms that need to be learned. A number of studies have developed distributed learning algorithms for restless Markovian channel model, where each channel yields the same expected rate for all users \cite{Tekin_2012_Online, liu2012learning, gafni2020learning}. In our recent work, a more general model was analyzed where each channel yields a different expected rate for each user \cite{gafni2022distributed}, and a model that includes exogenous global Markov process that controls the states of arms \cite{gafni2022learning}. These studies use the framework of restless MAB (RMAB). Model-free learning strategies were developed in \cite{naparstek2018deep,liu2021dynamic, bokobza2023deep, paul2023multi, cohen2024sinr}. Finally, none of these studies have considered the problem of achieving provable stable strategies in the polygamy setting considered here. 

Consistent with subsequent studies that have integrated learning strategies into spectrum access models, including classic SMP and game-theoretic approaches as discussed above, we expect that this work will inspire further exploration of learning strategies within the newly formulated SPP introduced in this paper. Consequently, this new approach is anticipated to make a significant contribution to the field of spectrum access.

\section{System Model and Problem Formulation}
\label{sec:system}

We begin by describing the polygamy setting considered in this paper, followed by an explanation of the applications of SPP in spectrum access with channel reuse.

\subsection{Formulation of the Stable Polygamy Problem (SPP)}

Let $\mathcal{L}$ be a larger set comprising $L$ individuals, and let $\mathcal{S}$ be a smaller set consisting of $S$ individuals, plus a virtual individual $V$ who holds the lowest preference compared to all real individuals $s \in \mathcal{S}$. We will explain the role of the virtual individual later. Individuals from set $\mathcal{L}$ can only be matched to individuals from set $\mathcal{S}$. Intra-set matches are not permitted. Each individual $\ell\in\mathcal{L}$ possesses a social constraint set $\mathcal{C}_{\ell}$ that contains all other individuals in $\mathcal{L}$ with whom they cannot coexist harmoniously with the same individual $s\in\mathcal{S}\setminus\{V\}$. 

We note that it is permissible to have polygamy scenarios where certain individuals (say $\ell$) in $\mathcal{L}$ are not matched with any real individual in $\mathcal{S}\setminus\{V\}$ due to social constraints. For the sake of mathematical correctness, we consider such cases by treating $\ell$ as if they are matched with the virtual individual $V \in \mathcal{S}$. This mapping is allowed for all $\ell \in \mathcal{L}$ where the constraints are relaxed and not considered.

To streamline the terminology related to social constraints among individuals, we introduce the following definitions.

\begin{definition}[Social compatibility]
    Individuals $\ell_1\in\mathcal{L}$ and $\ell_2\in\mathcal{L}$ are considered socially compatible if $\ell_2 \notin \mathcal{C}_{\ell_1}$ and $\ell_1 \notin \mathcal{C}_{\ell_2}$. Otherwise, they are termed socially incompatible.
\end{definition}

\begin{definition}[Social availability]
An individual $s\in\mathcal{S}\setminus\{V\}$ is considered socially available to $\ell\in\mathcal{L}$ if all individuals matched with $s$ who are preferred by $s$ over $\ell$ are socially compatible with $\ell$.
\end{definition}

It is desired to match multiple individuals in $\mathcal{L}$ to a single individual in $\mathcal{S}$ given the social constraints described above, and preference values. We examine two prevalent settings for preference values commonly explored in matching problem formulations: \emph{preference ranking}, and \emph{common utility}.\vspace{0.2cm} 

\noindent
\textbf{Preference ranking:} In this setting, each individual maintains a preference ranking for every other individual in the opposing set. The preference matrix for individuals in $\ELL$ is denoted by $R^\ELL \in \{1, 2, 3, \ldots, S \}^{L\cross S}$, where the $(i, j)$-th entry signifies the ranking assigned to $s_j\in\ES$ by $\ell_i\in\ELL$, where rank $1$ stands for the most preferred and rank $S$ the least preferred. Specifically, each row in $R^\ELL$ constitutes a permutation of the set $\{1, 2, 3, \ldots, S\}$. The virtual individual is always assigned the lowest preference value, $S$, meaning that all entries in the last column of $R^\mathcal{L}$, associated with the virtual individual, are equal to $S$. Analogously, the preference matrix for individuals in $\ES$ is denoted by $R^\ES \in \{1, 2, 3, \ldots, L\}^{L\cross S}$, where the $(i, j)$-th entry signifies the ranking assigned to $\ell_i\in\ELL$ by $s_j\in\ES$, where rank $1$ stands for the most preferred and rank $L$ the least preferred. Specifically, each column in $R^\ES$ constitutes a permutation of the set $\{1, 2, 3, \ldots, L\}$. For the virtual individual, we set the rankings arbitrarily from $1$ to $L$, meaning the last column of $R^\mathcal{S}$, associated with the virtual individual, is equal to $(1, 2, ..., L)^T$.
\vspace{0.2cm}

\noindent
\textbf{Common utility:} Let $U$ be a common utility matrix with dimension $L\cross S$ for the SPP problem. In this setting, each match $(\ell,s)$ for $\ell\in\ELL$ and $s\in\ES\setminus\{V\}$, has a utility value $u_{(\ell,s)}> 0$. A match between $\ell\in\ELL$ and the virtual individual $V$ has utility zero: $u_{(\ell,V)}= 0$. We can view this formulation as a value that an expert (e.g., match-maker) grades a matching, or a social-welfare utility (i.e.,  $U=R^\ELL+R^\ES$). This is also known as an assignment problem. Here, $\ell\in\ELL$ prefers $s_1\in\ES$ over $s_2\in\ES$ if $u_{(\ell,s_1)}>u_{(\ell,s_2)}$, and similarly $s\in\ES$ prefers $\ell_1\in\ELL$ over $\ell_2\in\ELL$ if $u_{(\ell_1,s)}>u_{(\ell_2,s)}$. This setting was considered in the special case of SMP to solve the stable spectrum access problem in \cite{leshem2011multichannel} (and subsequent studies) to obtain one-to-one stable matching between users and channels.

Below, we define admissible polygamy, harmonious polygamy, and stable polygamy. 

\noindent
\begin{definition}[Admissible polygamy]
An admissible polygamy $\phi: \mathcal{L} \rightarrow \mathcal{S}$ is a matching of individuals in $\mathcal{L}$ to individuals in $\mathcal{S}$, where each individual in $\mathcal{L}$ is matched to exactly one individual in $\mathcal{S}$, and multiple individuals in $\mathcal{L}$ can be matched to the same individual in $\mathcal{S}$.
\end{definition}

\noindent
\begin{definition}[Harmonious polygamy]
    A harmonious polygamy is an addmisible polygamy $\phi: \mathcal{L} \rightarrow \mathcal{S}$ in which there are no individuals $\ell\in \mathcal{L}$, $\ell'\in\mathcal{C}_{\ell}$ which are matched to the same individual $s\in\mathcal{S}\setminus\{V\}$.
\end{definition}

\noindent
\begin{definition}[Stable polygamy]
    A harmonious polygamy $\phi: \mathcal{L} \rightarrow \mathcal{S}$ is considered stable if for every individual $\ell_1\in L$ that prefers another individual $s\in\mathcal{S}$ over the individual to whom it is already matched, then there exists individual $\ell_2$ which is matched to $s$, which is preferred by $s$ over $\ell_1$, such that $\ell_2\in \mathcal{C}_{\ell_1}$ or $\ell_1\in \mathcal{C}_{\ell_2}$.
\end{definition}

We now define the stable polygamy problem (SPP):\vspace{0.2cm}

\noindent
\begin{definition}[Stable polygamy problem] The objective of the SPP is to find a stable polygamy $\phi: \mathcal{L} \rightarrow \mathcal{S}$ for the sets of individuals. 
\end{definition}

\subsection{Applications in Spectrum Access with Channel Reuse}
\label{ssec:applications}

We will now explain the applications of SPP in spectrum access with channel reuse. Channel reuse is a fundamental concept in wireless communications aimed at increasing the utilization of available radio frequencies while reducing interference between neighboring cells or users. In interference environments, it is common to model mutual interference by links between cells or users, allowing those close to each other (i.e., connected by interference links) to use different channels to ensure frequency separation. Users located far from each other (i.e., not connected by interference links) can use the same channel. In this context, the set of vertices represents the large set of users $\mathcal{L}$, the interference links represent the social constraints, and the set $\mathcal{S}$ represents the set of channels. This approach enables network operators to efficiently serve a large number of users within a given geographical area.

Furthermore, channel reuse can be employed in high-interference environments, where users are classified by priority. For instance, a high-priority user can be assigned to the same channel as low-priority users, and by adjusting their transmission parameters accordingly, they will receive the desired QoS. However, two or more high-priority users cannot be matched to the same channel. These restrictions define the social constraints.

Note that both preference ranking and common utility are applicable for modeling utility in spectrum access applications. In the common utility model, the achievable rate (or a function of the rate) of user $\ell$ over channel $k$ can be represented by the common utility associated with the match between $\ell \in \mathcal{L}$ and $k \in \mathcal{S}$. In the preference ranking model, each user $\ell \in \mathcal{L}$ can rank the channels they wish to access based on the rate they achieve on each channel. Meanwhile, a control manager ranks the preferred users on each channel based on factors such as user characteristics and the potential interference a user might cause to external usage of the channel or channel operations.

\section{Graph-Theoretic Analysis of SPP and Its Applications in Spectrum Access with Channel Reuse}
\label{sec:graph}

In this section, we analyze the SPP using the framework of graph theory. We will analytically demonstrate that classic algorithms designed to solve the special case, the SMP, are no longer efficient when applied to the generalized setting, necessitating new algorithmic developments to address the SPP. We will then provide specific applications of the graph-theoretic framework in spectrum access with channel reuse.

\subsection{Formulation of SPP as a Generalized Vertex Coloring}
\label{subsec:Reformulation}

Consider a graph where each vertex represents an individual from set $\mathcal{L}$. Each individual $\ell$, represented as a vertex, is connected by a directed edge to all other vertices corresponding to individuals in $\mathcal{L}$ that are within its social constraints $C_\ell$. For harmonious polygamy, no two individuals of $\mathcal{L}$ can be paired with the same individual $s\in\mathcal{S}\setminus\{V\}$, even if only one of them falls within the social constraints of the other. Therefore, the graph can be characterized by its underlying undirected graph, which is formed by removing edge directionality and eliminating duplicate edges. From this point forward, we will refer to this graph as an undirected graph.

Each individual in $\mathcal{S}$ will be represented by a distinct color. A valid generalized vertex coloring ensures that no adjacent vertices share the same color. In the preference ranking setting, each vertex has a preference rating for each color in set $\mathcal{S}$, represented by the matrix $R^\mathcal{L}$, while each color in $\mathcal{S}$ has a preference rating for each vertex, denoted by $R^\mathcal{S}$. In the common utility setting, each coloring of $\ell \in \mathcal{L}$ by $s \in \mathcal{S}$ has an associated utility value $u_{(\ell,s)} > 0$, given by the matrix $U$. It is important to note that within this generalized coloring problem, some nodes may remain uncolored. These uncolored nodes correspond to individuals in $\mathcal{L}$ who were unmatched in the SPP problem. To address this, we define a virtual color $V$, which can be assigned to neighboring nodes without conflict, in line with the concept of the virtual individual in the SPP.

In this formulation, the definitions of admissible, harmonious, and stable polygamy are adapted as follows: $\mathcal{L}$ represents the set of nodes, $\mathcal{S}$ denotes the set of colors, social constraints $\mathcal{C}_{\ell}$ are determined by the edges, and preference ranking and common utility define the coloring payoff. In this context, stable polygamy corresponds to a valid generalized $\phi$-coloring, in which for each vertex $\ell$ and for each color $s$ that is more preferred by $\ell$ then its assigned color $\phi(\ell)$, there exists at least one adjacent vertex colored with $s$ that $s$ prefers over $\ell$.

\begin{corollary}[SMP as a special case of SPP]
Consider a complete graph in the generalized vertex coloring formulation of SPP with vertex set $\mathcal{L}$ and color set $\mathcal{S}$, under the condition $|\mathcal{L}| = |\mathcal{S} \setminus\{V\}|$. In this scenario, the SPP reduces to the SMP.
\end{corollary}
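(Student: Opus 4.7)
The plan is to show three things: (i) any harmonious polygamy on a complete graph assigns distinct real colors to distinct vertices, (ii) the cardinality assumption together with the stability requirement forces no vertex to be assigned the virtual color $V$, and (iii) once we have a bijection, the SPP stability condition collapses to the classical SMP stability condition.

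First, I would exploit the completeness of the graph to settle the harmonious part. Since every pair of vertices in $\mathcal{L}$ is adjacent, every other vertex lies in $\mathcal{C}_{\ell}$ for each $\ell \in \mathcal{L}$. By the definition of harmonious polygamy, no real color $s \in \mathcal{S}\setminus\{V\}$ may be reused, so the induced map $\phi$ assigns distinct real colors to distinct vertices (with the possible exception of vertices assigned $V$, which may be shared).

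Next, I would invoke the cardinality condition $|\mathcal{L}|=|\mathcal{S}\setminus\{V\}|$ to rule out $V$-assignments in any \emph{stable} polygamy. Suppose for contradiction that some $\ell_1$ is assigned $V$. Then, by pigeonhole together with step (i), at least one real color $s^\star \in \mathcal{S}\setminus\{V\}$ has no vertex matched to it. Since $V$ has the lowest preference (respectively, zero utility), $\ell_1$ prefers $s^\star$ over its assigned color $V$; but no $\ell_2$ is matched to $s^\star$, so the existential clause in the definition of stable polygamy fails. This contradicts stability, so every vertex is matched to a real color, and the pairing is a bijection $\phi:\mathcal{L}\to\mathcal{S}\setminus\{V\}$.

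Finally, I would translate stability. For a bijection on a complete graph, any $\ell_2$ matched to some color $s$ is automatically socially incompatible with any other $\ell_1$, so the clause ``$\ell_2\in\mathcal{C}_{\ell_1}$ or $\ell_1\in\mathcal{C}_{\ell_2}$'' in the stable polygamy definition is trivially true. The remaining condition — whenever $\ell_1$ prefers $s$ over its current match, the unique vertex $\ell_2$ currently matched to $s$ is preferred by $s$ over $\ell_1$ — is precisely the Gale–Shapley no-blocking-pair condition defining SMP stability. The preference matrices $R^{\mathcal{L}}$ and $R^{\mathcal{S}}$ restricted to the real individuals play the role of the two SMP ranking lists.

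The main obstacle I anticipate is the careful bookkeeping around the virtual individual $V$: one must argue that the dummy column appended to $R^{\mathcal{S}}$ and the lowest-rank row entries in $R^{\mathcal{L}}$ together guarantee that no stable solution can hide behind $V$-assignments. Once this subtlety is handled, the rest of the argument is essentially an unpacking of definitions, and the reduction is immediate.
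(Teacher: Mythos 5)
Your proposal is correct and follows essentially the same route as the paper: completeness of the graph forces at most one vertex per real color, the cardinality condition together with $V$ being least preferred rules out $V$-assignments, and the stability condition then collapses to the Gale--Shapley no-blocking-pair condition. Your version is in fact more rigorous than the paper's three-sentence sketch --- in particular, the pigeonhole argument showing that a $V$-assigned vertex would face an unoccupied real color (violating stability) is exactly the justification the paper asserts but does not spell out.
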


\begin{proof}
The condition $|\mathcal{L}| = |\mathcal{S} \setminus\{V\}|$ ensures that the two groups are equally sized in terms of real individuals, similar to the SMP. Since the graph is complete, the social constraints enforce a one-to-one mapping between the sets $\mathcal{L}$ and $\mathcal{S}$, meaning no individual $s \in \mathcal{S}$ will be matched with more than one individual. As the virtual individual $V$ is the least preferred by all $\ell \in \mathcal{L}$, it becomes irrelevant and will not be assigned to any user in this specific setting.
\end{proof}

Note that our generalized vertex coloring formulation offers a new interpretation of the classic matching problem in the monogamy setting, as in SMP. In this context, the matching does not occur between two sets of nodes in a bipartite graph. Instead, in the generalized vertex coloring formulation, the matching is between vertices and colors. This generalized formulation later helps us demonstrate the computational intractability of achieving an optimal solution in the polygamy setting compared to the monogamy setting.

\subsection{Computational Complexity of Finding the Optimal Solution under Common Utility}
\label{ssec:computational}
In the classic monogamy setting of standard one-to-one matching with common utility (also known as the assignment problem), an optimal solution is defined as the solution that maximizes the sum utility over all matches, $\max \sum_{\ell_i \in \mathcal{L}} u_{\ell_i,\phi(\ell_i)}$. To achieve an optimal outcome in this setting, the Hungarian method can be applied, yielding an optimal solution with a computational complexity of $O(\mathcal{L}^3)$. However, this paper extends to a polygamy setting of many-to-one matching that incorporates social constraints among individuals within the larger group. Therefore, we present the following theorem:
\begin{theorem}
\label{NP_hard}
    Determining an optimal solution that maximizes the sum utility over all matches in the polygamous matching setting with social constraints is NP-hard.
\end{theorem}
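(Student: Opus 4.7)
The plan is to establish NP-hardness by a polynomial-time reduction from the classical graph $k$-coloring decision problem, which is well-known to be NP-complete for every fixed $k \geq 3$. The generalized vertex-coloring formulation of SPP developed in Section \ref{subsec:Reformulation} makes the reduction essentially structural: the social-constraint graph already plays the role of the coloring graph, and the set $\mathcal{S}\setminus\{V\}$ already plays the role of the palette of colors, so it only remains to design the utility values so that maximum sum utility encodes $k$-colorability.

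Concretely, I would take an arbitrary instance $(G,k)$ of graph $k$-coloring with $G=(V,E)$, and build an SPP instance as follows. Let $\mathcal{L}:=V$, let the social-constraint graph be $G$ itself (so $\ell' \in \mathcal{C}_\ell$ iff $\{\ell,\ell'\}\in E$), let $\mathcal{S}\setminus\{V\}$ consist of $k$ real colors $\{s_1,\dots,s_k\}$, and define the common utilities by $u_{(\ell,s_i)} := 1$ for every $\ell \in \mathcal{L}$ and every real color $s_i$, while $u_{(\ell,V)}=0$ as required. This construction is clearly polynomial in the size of $(G,k)$.

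The correctness argument is the core step. Any harmonious polygamy $\phi$ assigns each vertex either a real color (contributing $1$ to the sum utility) or the virtual color (contributing $0$), and by the harmoniousness definition the vertices colored with any fixed real color form an independent set in $G$. Hence the maximum achievable sum utility equals the size of the largest subset of $V$ that admits a proper $k$-coloring, and in particular equals $|V|$ if and only if $G$ itself is $k$-colorable. Thus an oracle for the optimal SPP sum-utility (or even for its threshold decision version at value $|V|$) would decide $k$-coloring in polynomial time, establishing NP-hardness of the optimization problem.

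I do not anticipate a serious technical obstacle: the reduction is immediate once one views SPP through the generalized vertex-coloring lens already set up in the paper. The only subtlety worth making explicit is to check that the harmoniousness constraint (rather than any additional stability requirement) is what the theorem's "polygamous matching setting with social constraints" refers to, so that the optimization ranges exactly over harmonious polygamies; this is consistent with the way the optimal value is posed in the paragraph preceding the theorem, where the objective $\max\sum_{\ell_i\in\mathcal{L}} u_{\ell_i,\phi(\ell_i)}$ is taken over admissible matchings satisfying the social constraints. Stability is not required for the hardness statement and, if anything, further restricting to stable polygamies can only make the problem harder, so the reduction above suffices to prove the theorem as stated.
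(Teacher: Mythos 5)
Your reduction is correct, and it shares the paper's central gadget --- a constant utility matrix, under which maximizing sum utility is exactly maximizing the number of vertices that receive a real color subject to the social constraints --- but it completes the argument by a genuinely different route. The paper gives a Turing reduction: assuming a polynomial-time optimizer, it runs the optimizer repeatedly with $1, 2, \ldots, L$ available colors until every node is colored, thereby computing the chromatic number of the constraint graph in at most $L$ calls, contradicting the NP-hardness of that problem. You instead give a single-call reduction from the $k$-colorability decision problem for fixed $k \geq 3$: with $k$ real colors and unit utilities, the optimal sum utility equals $|V|$ if and only if $G$ is $k$-colorable. Your version is arguably tighter --- it needs only one oracle call, names the precise NP-complete source problem, and yields hardness even when the number of real colors is a small constant, whereas the paper's iteration implicitly requires instances with up to $L$ colors. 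The paper's version, on the other hand, directly exhibits what a hypothetical optimizer would let you compute (the chromatic number), which is perhaps more illustrative. One small caveat: your closing remark that restricting to stable polygamies ``can only make the problem harder'' is not a valid complexity-theoretic inference in general (shrinking the feasible set does not monotonically increase hardness), but it is an aside and the theorem as posed optimizes over harmonious matchings, so nothing in your proof depends on it.
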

\begin{proof}
    Assume, for the sake of contradiction, that a polynomial-time algorithm with $L$ exists for finding an optimal solution to the problem. Consider the scenario where the utility matrix $U$ has the same value for all entries. In such a case, an optimal solution would aim to maximize the number of nodes that receive a color assignment. Specifically, the objective would be to color as many nodes as possible without violating any social constraints. Given that the chromatic number is bounded by the total number of nodes $\ELL$, one could iteratively apply the aforementioned algorithm, incrementing the available color count by one in each iteration, until all nodes are successfully colored. Since the previous iteration failed to color all nodes, the chromatic number—defined as the minimum number of colors needed for a valid coloring—must correspond to the color count in the current iteration. Employing this approach, we can determine the chromatic number using at most $\ELL$ iterations of the polynomial-time algorithm. This contradicts the established NP-hard nature of computing the chromatic number of the graph.
\end{proof} 

\begin{remark}
\textnormal{Theorem \ref{NP_hard} further underscores the motivation to apply SPP in spectrum access with channel reuse, as considered in this paper. While in the classic monogamy setting of one-to-one matching, the Hungarian method provides an efficient polynomial-time solution to maximize the sum utility (though with higher complexity than SMP), in the generalized polygamy setting of many-to-one matching, an efficient solution that maximizes the sum utility cannot be obtained. In Section \ref{sec:solving_utility}, we will show that a solution to the SPP under common utility always exists, and we will develop an efficient distributed algorithm to solve the SPP in this setting.} 
\end{remark}

\subsection{Unsolvable Scenarios in Preference-Based SPP}

In contrast to SPP under common utility, we show below that a solution to SPP under preference ranking does not always exist. Nevertheless, in Section \ref{sec:solving_preference}, we will develop an efficient centralized algorithm to solve the SPP, demonstrate its effectiveness through simulations across a wide range of problems, and prove its correctness in certain special cases.

\begin{theorem}
\label{th:Not}
Not all instances of SPP with preference ranking are solvable.    
\end{theorem}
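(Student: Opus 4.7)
The statement claims existence of an SPP instance with preference rankings for which no stable polygamy exists, so the natural plan is a proof by explicit counterexample. I would not attempt any general structural argument; instead I would exhibit a single small instance and then verify, by exhaustion, that every harmonious polygamy on that instance admits a blocking pair.

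First, I would construct the instance. Using the graph-theoretic reformulation from Section \ref{subsec:Reformulation}, I would pick a small vertex set $\mathcal{L}$ and color set $\mathcal{S}$ that still gives enough room for multiple harmonious colorings (so the claim isn't vacuous), but is small enough that the set of harmonious colorings can be enumerated. A natural candidate is $|\mathcal{L}|=4$ with the $4$-cycle $\ell_1\!-\!\ell_2\!-\!\ell_3\!-\!\ell_4\!-\!\ell_1$ as the social-constraint graph, and $|\mathcal{S}\setminus\{V\}|=2$. Harmonious colorings then come in two shapes: the two ``full'' colorings using the two maximum independent sets $\{\ell_1,\ell_3\}$ and $\{\ell_2,\ell_4\}$ distributed across $s_1,s_2$, and various ``partial'' colorings in which at least one vertex is assigned to $V$.

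Second, I would choose preferences to induce a blocking pair in every harmonious coloring. The design principle is a cyclic-rotation pattern inspired by the classical Stable Roommates counterexample: set $R^{\mathcal{L}}$ so that the two independent sets want \emph{opposite} colors (so each full coloring is attacked by someone at a real color wanting the other), and choose $R^{\mathcal{S}}$ so that for each real channel the top choice is one of the vertices \emph{not} currently at that channel in whichever full coloring we examine. Concretely, one picks $R^{\mathcal{S}}$ such that $s_1$'s top from $\{\ell_1,\ell_3\}$ differs from its top from $\{\ell_2,\ell_4\}$, and symmetrically for $s_2$, while $R^{\mathcal{L}}$ is arranged so that these ``tops'' do want to switch.

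Third, I would verify the instance by cases. For each of the two full colorings, I would identify the vertex whose top-ranked channel is the other one and show that no incompatible vertex at that channel is preferred by the channel, so a blocking pair exists. For every partial coloring I would exploit the \emph{virtual-individual trap}: if some $\ell$ is assigned $V$, then for $\ell$ to be non-blocking at a real color $s$, there must exist some $\ell'\in\mathcal{C}_\ell$ at $s$ whom $s$ ranks above $\ell$. Because the $4$-cycle has non-edges $(\ell_1,\ell_3)$ and $(\ell_2,\ell_4)$, any partial coloring in which a vertex sent to $V$ is socially compatible with every occupant of some real channel automatically yields a blocking pair. I would show, by pairing each partial-coloring case with such a compatible occupant, that these matchings are unstable as well.

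The main obstacle is step three: making sure the partial colorings are \emph{all} unstable. The two full colorings can be killed almost for free by the cyclic-preference trick, but partial colorings are numerous, and the safety valve provided by $V$ must be systematically closed off. The craft of the construction is choosing $R^{\mathcal{S}}$ so that, simultaneously, (i) for each of the two full colorings there is a real-channel blocker, and (ii) for each partial coloring the vertex assigned to $V$ has a non-neighbor at some real channel that the channel prefers over it. If I cannot close both at once with the $4$-cycle on two real colors, I would enlarge the instance (e.g., take an odd cycle, or add a pendant vertex whose only neighbor creates the needed non-edges) until the two requirements become simultaneously satisfiable; then the proof reduces to the finite case check sketched above.
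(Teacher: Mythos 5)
Your strategy is the same as the paper's: exhibit one explicit instance and verify by exhaustion that every harmonious polygamy admits a blocking pair. (The paper's witness has $|\mathcal{L}|=5$, $|\mathcal{S}\setminus\{V\}|=2$, fixed matrices $R^{\mathcal{L}},R^{\mathcal{S}}$, and a fixed five-vertex constraint graph.) You also correctly identify the crux, namely closing off the virtual-individual escape hatch for partial matchings.

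The gap is that you never actually deliver a verified instance, and for a pure existence theorem the verified instance \emph{is} the proof; the fallback ``if the $4$-cycle fails, enlarge until it works'' presupposes the conclusion. Moreover, your concrete candidate appears to fail for exactly the reason you worry about. Take the $4$-cycle $\ell_1\ell_2\ell_3\ell_4$ with two real colors and a natural instantiation of your recipe, e.g.\ $\ell_1,\ell_2$ rank $s_1$ first, $\ell_3,\ell_4$ rank $s_2$ first, $s_1:\ \ell_2\succ\ell_1\succ\ell_4\succ\ell_3$, $s_2:\ \ell_4\succ\ell_3\succ\ell_2\succ\ell_1$ (chosen so that both full $2$-colorings are blocked). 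Then $\phi(\ell_2)=s_1$, $\phi(\ell_4)=s_2$, $\phi(\ell_1)=\phi(\ell_3)=V$ is stable: $\ell_2$ and $\ell_4$ sit at their top choices, and each of $\ell_1,\ell_3$ is adjacent to both occupants, each of whom the respective channel prefers over it. The other reading of your ``opposite colors'' rule is worse still, since it makes one of the full colorings everyone's first choice. With only four vertices and two channels, a single channel-favorite per channel can saturate both channels while dominating all of its neighbors, which is presumably why the paper's example needs five vertices. So: right approach, but the theorem is only proved once you commit to one specific $(G,R^{\mathcal{L}},R^{\mathcal{S}})$ and actually complete the finite check --- and that instance will have to be larger or structured differently than the one you sketch.
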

\begin{proof}
We prove the theorem by providing an example of an unsolvable instance of the problem. Consider a larger set of size $|\mathcal{L}|=5$ and a smaller set of size $|\mathcal{S}\setminus\{V\}|=2$ with the following preference matrices:
$$
R^\mathcal{L} = 
\begin{pmatrix}
1 & 2 & 3\\
2 & 1 & 3\\
1 & 2 & 3\\
2 & 1 & 3\\
1 & 2 & 3\;
\end{pmatrix},\;
R^\mathcal{S} = 
\begin{pmatrix}
2 & 5 & 1\\
4 & 2 & 2\\
3 & 1 & 3\\
1 & 4 & 4\\
5 & 3 & 5
\end{pmatrix},
$$
where the third column is associated with the virtual individual $V\in\mathcal{S}$. Let the social constraints be given by the following graph:

\begin{figure}[H]
  \centering
  \includegraphics[width=0.2\textwidth]{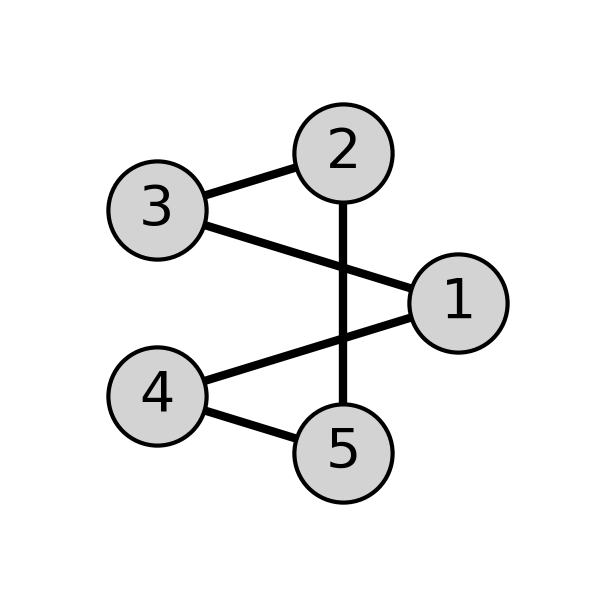}
  \label{fig:example_graph}
\end{figure}
\vspace{-0.5cm} 

\noindent
As explained in Section \ref{subsec:Reformulation}, this graph structure indicates that an edge between two nodes (i.e., individuals in $\mathcal{L}$) signifies social incompatibility between them.

By examining all possible polygamy matches for this example, it can be verified that a stable polygamy does not exist.
\end{proof}

\subsection{Applications of the SPP Graph-Theoretic Framework in Spectrum Access with Channel Reuse}

Here, we formalize the widely adopted interference graph model for spectrum access in communication networks with channel reuse \cite{srikant2013communication, cohen2017distributed}, as discussed in Section \ref{ssec:applications}. For the first time, we introduce a stable matching formulation for this model through the generalized SPP framework. In subsequent sections, we will develop efficient algorithms to solve the problem.

Consider a wireless network comprising a set $\mathcal{L} = \{1, 2, ..., L\}$ of cells (i.e., the larger set) and a set $\mathcal{S} = \{1, 2, ..., S\}$ of shared channels (i.e., the smaller set). In alignment with our formulation, channel $S$ is defined as a zero-utility channel representing the virtual individual $V$ in the smaller set. 

We focus on a spatial wireless network, where each cell is in the interference range of a few (but not necessarily all) other cells. We assume symmetric interference ranges for all cells in the sense that cell $\ell_1$ is in cell $\ell_2$'s interference range only if cell $\ell_2$ is in cell $\ell_1$'s interference range for all $\ell_1, \ell_2\in\ELL$. We refer to cells in the same interference range as \emph{neighbors}, and represent by $\mathcal{C}_\ell\subseteq \ELL\setminus\{\ell\}$ the set of cell $\ell$'s neighbors. Each cell is allowed to choose a single channel for transmission.

The network topology can be represented by an undirected graph $G= (\ELL, \mathcal{E})$, where the set of cells are represented by the vertices and the interference relationships between cells are represented by the set of edges $\mathcal{E}$. An edge $(\ell_1, \ell_2)\in\mathcal{E}$ means that cells $\ell_1$ and $\ell_2$ are in the same interference range. The set of cell $\ell$'s neighbors $\mathcal{C}_{\ell}$ is represented by vertices directly connected to vertex $\ell$ excluding vertex $\ell$ itself. It is desired to allocate different channels to cells in the same interference range to avoid interference between cells. 

In the common utility model, the achievable rate (or a function of the rate) of cell $\ell$ over channel $k$ can be represented by the common utility associated with the match between cell $\ell \in \mathcal{L}$ and channel $k \in \mathcal{S}$. In the preference ranking model, each cell $\ell \in \mathcal{L}$ can rank the channels they wish to access based on the rate (or a function of the rate) they achieve on each channel. Meanwhile, a control manager ranks the preferred users on each channel based on factors such as user characteristics and the potential interference a user might cause to external usage of the channel or channel operations.

\section{Solving SPP with Common Utility for Spectrum Access with Channel Reuse}
\label{sec:solving_utility}

In this section, we develop an efficient distributed algorithm to solve the SPP with common utility in polynomial time, enabling stable spectrum access with channel reuse through simple distributed mechanisms.

\subsection{The Distributed SPP for Spectrum Access and Reuse (DSSAR) Algorithm}

As discussed in Section \ref{ssec:computational}, maximizing the sum matching utility is NP-hard in the polygamy setting, highlighting the motivation to apply SPP in spectrum access with channel reuse, as considered in this paper. Therefore, in this section, we develop a distributed algorithm, dubbed Distributed SPP for Spectrum Access and Reuse (DSSAR), to solve the stable spectrum access with channel reuse by addressing the SPP.

We describe the flow of DSSAR, prove that it solves the SPP in polynomial time, and explain its implementation in a distributed manner.

The pseudocode of DSSAR is provided in Algorithm \ref{alg:DSSAR}. DSSAR takes as input a larger set of individuals $\mathcal{L}$ (cells), a smaller set of individuals $\mathcal{S}$ (channels), a utility matrix $U$ of size $L\times S$ (where the entries in the last column correspond to the virtual channel and are thus zero, $u_{(\ell,S)}=0$ for $\ell=1, ..., L$), and social constraints $C_{\ell}$ for each $\ell\in\mathcal{L}$.

We initialize DSSAR by defining a matrix $\tilde{U}$ and assigning $U$ to it (the role of $\tilde{U}$ will be explained later). The initial matching assigns the virtual channel $S$ to each cell. 

In the routine, DSSAR runs for at most $L$ iterations, with each iteration assigning a channel to one cell. The selection of the next cell to be assigned a channel is based on the utility value. Therefore, we use the definition of $\tilde{U}$ to be a zero-updated utility matrix, depending on the updated assignment. Specifically, at each iteration, once cell $\ell^*$ is assigned a channel $s^*$, we update each entry $\tilde{u}_{(\ell^*, s)}$ in $\tilde{U}$ to be zero for all $s=1, ..., S-1$. This indicates that $\ell^*$ has been assigned and will not be selected again for assignment. Additionally, we update each entry $\tilde{u}_{(\ell, s^*)}$ to be zero for all $\ell\in\mathcal{C}_{\ell^*}$ (or $\ell^*\in\mathcal{C}_{\ell}$ if non-symmetric interference range is allowed). This means that cells in the interference range of $\ell$ will not be assigned to channel $s^*$. If, at any step, all entries in $\tilde{U}$ are zero, the routine can terminate early, exiting the loop (line 7). \vspace{0.2cm}

\begin{algorithm}
\caption{The DSSAR Algorithm}
\begin{algorithmic}[1]
\Require $\ELL, \ES, U, \mathcal{C}_{\ell} \; \forall \ell \in \ELL$
\State \textbf{Initialization:} 
$\tilde{U} \gets U$, $\phi(\ell) \gets V\;\forall\ell\in\mathcal{L}$  
\For{$iteration = 1, 2, \ldots, L$}
    \State $(\ell^*, s^*) \gets \argmax_{\ell,s} \tilde{u}_{(\ell,s)}$
    \If{$\tilde{u}_{(\ell^*,s^*)} > 0$}
        \State $\phi(\ell^*) \gets s^*$
    \Else
        \State \textbf{break}
    \EndIf
    \State $\tilde{u}_{(\ell^*, s)} \gets 0$ \textbf{for all} $s = 1, \ldots, S-1$
    \State $\tilde{u}_{(\ell, s^*)} \gets 0$ \textbf{for all} $\ell$ \textbf{such that} $\ell \in \mathcal{C}_{\ell^*}$ \textbf{or} $\ell^* \in \mathcal{C}_{\ell}$
\EndFor
\end{algorithmic}
\label{alg:DSSAR}
\end{algorithm}

We next prove that DSSAR solves the SPP.  

\begin{theorem}
The DSSAR algorithm solves the SPP, and the running time complexity is $O(L^2 S)$\vspace{0.2cm} 
\end{theorem}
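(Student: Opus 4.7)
The plan is to verify that $\phi$ returned by DSSAR satisfies the three requirements of a stable polygamy (admissibility, harmony, stability), and then to bound the running time directly from the algorithm's structure. Admissibility is immediate from line~2, since $\phi(\ell)$ is initialized to $V$ for every $\ell$, and after any update $\phi(\ell^*) \gets s^*$ on line~5 the entire row $\ell^*$ of $\tilde{U}$ is zeroed on line~9, so no cell is ever reassigned and $\phi$ is a well-defined map $\mathcal{L} \to \mathcal{S}$.

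Harmony follows from the combined effect of lines~4 and~10. When iteration $t$ chooses a pair $(\ell^*, s^*)$ with $\tilde{u}_{(\ell^*, s^*)} > 0$, I would argue that no earlier iteration can have assigned any neighbor of $\ell^*$ to $s^*$, for otherwise line~10 of that earlier iteration would already have zeroed $\tilde{u}_{(\ell^*, s^*)}$. Line~10 of the current iteration then rules out any later neighbor of $\ell^*$ being assigned to $s^*$, so no two socially incompatible cells share a real channel.

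The main obstacle is stability, which I would argue by contradiction. Suppose there exist $\ell_1$ and $s$ with $u_{(\ell_1, s)} > u_{(\ell_1, \phi(\ell_1))}$ yet no $\ell_2$ matched to $s$ with $\ell_2 \in \mathcal{C}_{\ell_1}$ or $\ell_1 \in \mathcal{C}_{\ell_2}$ satisfies $u_{(\ell_2, s)} > u_{(\ell_1, s)}$. Consider the first iteration $t^*$ at which $\tilde{u}_{(\ell_1, s)}$ is zeroed; such a $t^*$ must exist, since $u_{(\ell_1, s)} > u_{(\ell_1, \phi(\ell_1))} \ge 0$ is strictly positive and would otherwise force the argmax to pick $(\ell_1, s)$ at some iteration, contradicting $\phi(\ell_1) \ne s$. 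Either line~9 of iteration $t^*$ is responsible, in which case $\ell_1$ is being assigned to $\phi(\ell_1)$ and the argmax at line~3 selected $(\ell_1, \phi(\ell_1))$ over the still-positive entry $(\ell_1, s)$, forcing $u_{(\ell_1, \phi(\ell_1))} \ge u_{(\ell_1, s)}$ and contradicting the preference hypothesis; or line~10 is responsible, in which case some $\ell_2 \in \mathcal{C}_{\ell_1}$ (or $\ell_1 \in \mathcal{C}_{\ell_2}$) is being assigned to $s$ at line~5, and the argmax yields $u_{(\ell_2, s)} \ge u_{(\ell_1, s)}$. Under a standard deterministic tie-break in $\argmax$ this inequality is strict, producing the required blocking neighbor and contradicting the hypothesis. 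It is important to note that $\phi(\ell_2) = s$ persists at termination because row $\ell_2$ is permanently zeroed by line~9 of iteration $t^*$.

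For complexity, at most $L$ iterations are executed, each performing an $O(LS)$ scan at line~3 and an $O(L+S)$ zeroing at lines~9--10, for a total of $O(L^2 S)$. The only non-mechanical step of the argument is the stability case analysis, where the main subtlety is selecting the correct witness iteration $t^*$ and handling the potential tie in $\argmax$; both are resolved cleanly once a deterministic tie-breaking convention is imposed.
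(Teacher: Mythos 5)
Your proof follows essentially the same route as the paper's: admissibility from the row-zeroing in line~9, harmony from line~10, stability by contradiction, and the same $O(L^2S)$ count; your stability argument is just a more carefully instrumented version of the paper's informal ``$\ell_1$ would have attempted to acquire $s$ first,'' via the witness iteration $t^*$ and the line-9/line-10 case split. The one flawed step is your tie-breaking remark: a deterministic tie-break in the $\argmax$ selects among maximizers but does not make $u_{(\ell_2,s)} \ge u_{(\ell_1,s)}$ strict, so when $u_{(\ell_2,s)} = u_{(\ell_1,s)}$ the matched neighbor $\ell_2$ is not strictly ``preferred by $s$ over $\ell_1$'' and the stability definition can genuinely fail (e.g., two mutually incompatible cells with equal utility on a single real channel). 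The correct repair is to assume distinct utility entries (genericity), an assumption the paper's own proof also leaves implicit, so your added precision exposes a real edge case rather than introducing a new error of approach.
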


\begin{proof}
First, note that the solution of DSSAR is admissible and harmonious. This can be verified as follows. Once $\ell^*\in\ELL$ is assigned a channel $s^*\in\ES$, line 9 updates: $\tilde{u}_{(\ell^*, s)}\gets 0$ for all $s=1, ..., S-1$. Consequently, in the following iterations $\ell^*$ will not be assigned a channel. As a result, each cell is assigned a channel at most once, ensuring the solution is admissible. Also, line 10 updates: $\tilde{u}_{(\ell, s^*)}\gets 0$ for all $\ell$ such that $\ell\in\mathcal{C}_{\ell^*}$ or $\ell^*\in\mathcal{C}_{\ell}$. Consequently, in the following iterations all cells $\ell$ such that $\ell\in\mathcal{C}_{\ell^*}$ or $\ell^*\in\mathcal{C}_{\ell}$, will not be assigned to channel $s^*$. As a result, the solution is harmonious. 
Finally, we prove stability by contradiction. Assume there exists cell $\ell_1\in\ELL$ that prefers another channel $s\neq\phi(\ell_1)$ over the channel it is currently matched to, i.e., $u_{(\ell_1, s)}>u_{(\ell_1, \phi(\ell_1))}$, and assume by contradiction that there is no other cell $\ell_2$, such that $\ell_2\in \mathcal{C}_{\ell_1}$ or $\ell_1\in \mathcal{C}_{\ell_2}$, which is matched to $s$, i.e., $\phi(\ell_2)=s$, and fulfills: $u_{(\ell_2, s)}>u_{(\ell_1, s)}$. By the construction of the algorithm, since there is no conflict with any other assignment of channel $s$, cell $\ell_1$ would have attempted to acquire channel $s$ before accepting its current assignment $\phi(\ell_1)$. Then, $\phi(\ell_1)\gets s$, contradicting the assumption. Hence, the solution is stable. 

Next, observe that the algorithm iterates $L$ times, with each iteration requiring $O(LS)$ operations (for finding the maximum and making assignments). Consequently, the overall time complexity of the algorithm is $O(L^2 S)$.
\end{proof}

\subsection{Distributed Mechanisms for Implementing DSSAR}

Finally, we outline simple distributed mechanisms for implementing DSSAR. This can be achieved through opportunistic CSMA in the interference graph \cite{zhao2005opportunistic}, or by enabling local control messages between neighboring nodes, as proposed in distributed spectrum access algorithms \cite{cohen2017distributed}. \vspace{0.2cm} 

\subsubsection{Implementation of DSSAR via opportunistic CSMA}

CSMA is a widely-used communication protocol for managing spectrum sharing among multiple users. It allows devices within close range to monitor the channel for existing transmissions before initiating their own. If the channel is detected as idle, the device can proceed with its transmission. However, if ongoing activity is detected, the device will defer its transmission and may seek an alternative channel in multi-channel systems.

To implement DSSAR in a distributed manner, we can use opportunistic CSMA \cite{zhao2005opportunistic, cohen2010time, leshem2011multichannel}. In this approach, a backoff function maps utility values to backoff times, where the backoff time decreases monotonically with higher utility values. Consequently, the cell with the highest utility for a given channel will have the shortest wait time before transmission. Cells within the interference range of this cell will detect that the channel is occupied and will refrain from transmitting on that channel. This process, which aligns with line 10 of the algorithm, assumes symmetric interference ranges: if cell $\ell_1$ is within cell $\ell_2$'s interference range, then cell $\ell_2$ is also within cell $\ell_1$'s interference range. The procedure continues until all $L$ cells complete line 5, excluding those left unassigned due to social constraints. This approach provides a simple distributed implementation of DSSAR.\vspace{0.2cm}

\subsubsection{Implementation of DSSAR via local control messages}
Alternatively, DSSAR can be implemented by enabling local control messages between neighboring cells in the graph, as suggested in distributed spectrum access algorithms \cite{cohen2017distributed}. In this approach, a decreasing backoff function based on utility values is still employed. When a cell $\ell$ selects a channel $s$, it sends a control message to its neighboring cells to inform them that channel $s$ is in use. This method ensures a reliable implementation if perfect carrier sensing is not possible, but it requires the exchange of local control messages.

\section{Solving SPP with Preference Ranking for Spectrum Access with Channel Reuse}
\label{sec:solving_preference}

In this section, we address solving spectrum access with channel reuse using SPP with preference ranking. We develop a Re-Propose and Reject (RP\&R) algorithm to establish stable polygamy in a preference ranking scenario. We present the RP\&R algorithm to solve the problem, assuming a solution to SPP exists (note that a solution does not always exist, as shown in Theorem \ref{th:Not}). While we analytically prove that RP\&R solves SPP in specific interesting cases, simulation results demonstrate its strong performance across a wide range of general scenarios.

\subsection{The Re-Propose and Reject (RP\&R) Algorithm}

The pseudocode of RP\&R is provided in Algorithm \ref{alg:rpr}. For presenting the algorithm, we denote the preference sequence of $s$ as $\ell^s_1, \ell^s_2, \ldots, \ell^s_L$, where $\ell^s_1$ signifies the top preference of $s$, and $\ell^s_L$ denotes its least favored option.

RP\&R takes as input a larger set of individuals $\mathcal{L}$ (cells), a smaller set of individuals $\mathcal{S}$ (channels), a preference ranking matrix $R^\mathcal{L}$ of size $L\times S$ for individuals in $\mathcal{L}$, a preference ranking matrix $R^\mathcal{S}$ of size $L\times S$ for individuals in $\mathcal{S}$ (where the entries in the last column in $R^\mathcal{L}$, $R^\mathcal{S}$ correspond to the virtual channel), social constraints $C_{\ell}$ for each $\ell\in\mathcal{L}$, and the iteration number $T$. The number of iterations is at most $L \cdot S$, and it is smaller in certain special cases, as will be demonstrated later.

At each iteration, RP\&R cycles through each $s \in \ES$, conducting a Re-Propose step in which $s$ evaluates preferences from most to least favored. If a candidate $\ell$ is accessible and equally or more preferred than the current pairing, $\phi(\ell)$, then $\ell$ is re-assigned to $s$. If $\ell$ and $s$ were paired but a more preferred $\ell'$ who is not socially compatible with $\ell$ becomes available, then the match between $s$ and $\ell$ is dissolved. This ongoing process of evaluation and adjustment enables each proposer $s$ to continuously refine their matches, enhancing the system's overall stability and satisfaction.

Since a stable solution to the SPP with preference ranking is not guaranteed (Theorem \ref{th:Not}), the following subsections will explore several notable communication scenarios where the proposed algorithm does guarantee a solution.

\begin{algorithm}
\caption{Re-Propose \& Reject (RP\&R)}
\begin{algorithmic}[1]
\Require $\mathcal{L}, \mathcal{S}$ $R^\ELL, R^\ES, \mathcal{C}_{\ell} \; \forall \ell \in \ELL, T$ 
\State \textbf{Initialization:} $\forall \ell \in \ELL: \phi(\ell)=V $

\For{$iter = 1,2,\ldots, T$}
    \For{$s = 1,2,\ldots, S$}
        \For{$\ell = \ell^s_1, \ell^s_2, \ldots, \ell^s_L$}
            \If{$s$ is socially available to $\ell$} \Comment{$s$ proposes}
                \If{$R^\ELL_{\ell,s} \leq R^\ELL_{\ell,\phi(\ell)}$} \Comment{$\ell$ accepts}
                \State $phi(\ell)=s$ 
                \EndIf
            \ElsIf{$phi(\ell)=s$}
            \State $phi(\ell)=V$
            \EndIf
        \EndFor
    \EndFor
\EndFor
\end{algorithmic}
\label{alg:rpr}
\end{algorithm}

\subsection{SPP in an Inherently Harmonious System} 

In this scenario, we consider SPP with an empty social constraints graph, ensuring that any solution is harmonious. An illustration of such a graph with $L=6$ is provided below:
\vspace{-0.5cm} 
\begin{figure}[H]
  \centering
  \includegraphics[width=0.2\textwidth]{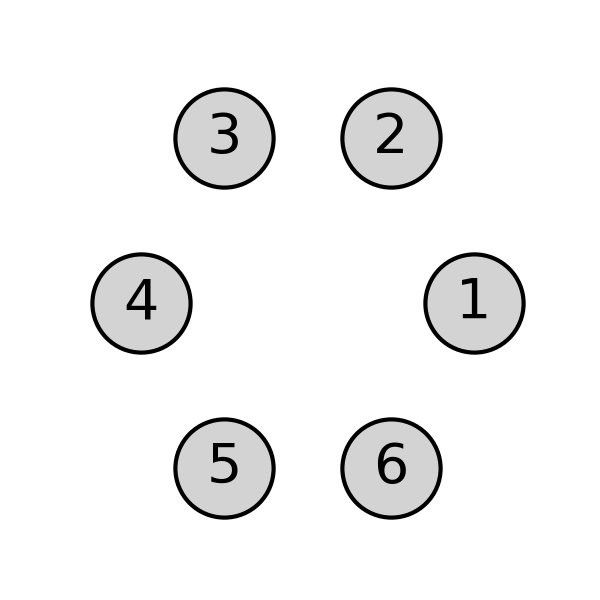}
  \caption{A sample visualization of an empty social constraints graph with 6 nodes ($L=6$).}
  \label{fig:empty_graph}
\end{figure}

An empty (edgeless) graph, meaning there are no social conflicts, and every individual in group $\ELL$ is compatible with everyone else. This situation represents a scenario with very low interference between cells in communication networks. Under these circumstances, the only stable solution is for each cell $\ell \in \ELL$ to be matched with their most preferred channel $s \in \ES$.
\begin{theorem}
    The RP\&R algorithm reaches a stable solution for an empty social constrains graph within 1 external iteration ($T=1$). 
\end{theorem}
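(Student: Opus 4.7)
The plan is to exploit the simple structure of the empty social constraints graph. With no edges, every pair $\ell_1, \ell_2 \in \mathcal{L}$ is socially compatible, so by the definition of social availability, every channel $s \in \mathcal{S}\setminus\{V\}$ is socially available to every $\ell \in \mathcal{L}$ at every moment, independent of the current matching $\phi$. As a consequence, throughout the single external iteration the condition in line~5 of Algorithm~\ref{alg:rpr} always evaluates to true, so the \texttt{elseif} branch that would dissolve an existing pairing is never triggered, and once $\ell$ accepts a proposal its match can only change to a strictly more preferred one.

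Next I would fix an arbitrary $\ell \in \mathcal{L}$ and track $\phi(\ell)$ across the proposals it receives during $iter = 1$. Initialization sets $\phi(\ell) = V$ with $R^\mathcal{L}_{\ell,V} = S$. Whenever a channel $s$ proposes to $\ell$, $\ell$ accepts iff $R^\mathcal{L}_{\ell,s} \leq R^\mathcal{L}_{\ell, \phi(\ell)}$, so the rank $R^\mathcal{L}_{\ell, \phi(\ell)}$ is nonincreasing in time. Since the preference sequence $\ell^s_1, \ldots, \ell^s_L$ enumerates all of $\mathcal{L}$, each channel proposes to $\ell$ exactly once during the iteration, including the channel $s^\star$ with $R^\mathcal{L}_{\ell, s^\star} = 1$. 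At the moment $s^\star$ proposes, the acceptance test passes and $\phi(\ell)$ becomes $s^\star$; no subsequent proposal can improve upon rank $1$, so $\phi(\ell) = s^\star$ at termination.

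Finally I would verify stability. The resulting polygamy is admissible by construction and harmonious because the constraint sets are empty. Since every $\ell$ is matched with its most preferred channel, there is no $\ell \in \mathcal{L}$ that prefers another $s$ over $\phi(\ell)$, so the universal condition in the definition of stable polygamy holds vacuously. I do not anticipate a real obstacle here; the only delicate step is confirming that the \texttt{elseif} branch of the pseudocode is never activated, which is precisely what guarantees that already-accepted matches survive until the end of the iteration and that $T=1$ suffices.
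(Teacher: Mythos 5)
Your proof is correct and follows essentially the same route as the paper's: with no social constraints every channel is always socially available, so after one pass each $\ell$ ends up (and stays) matched with its top-ranked channel, making stability vacuous. Your version is slightly more explicit about why the rejection branch never fires and why accepted matches can only improve, but the underlying argument is the same.
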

\begin{proof}
    We prove the theorem by demonstrating that RP\&R achieves a stable solution after only one Re-Proposing step by each element in $\ES$ (i.e., when $T=1$). Note that when any $s \in \ES$ undergoes Re-Proposing, every $\ell \in \ELL$, for which $s$ ranks highest in its preference list (indicating $\ell$'s utmost preference for $s$ over all other elements in $\ES$), will be paired with $s$. Furthermore, this pairing will remain intact as there are no social constraints compelling $s$ to relinquish $\ell$. Consequently, after a single Re-Proposing cycle for every $s \in \ES$, each $\ell$ will have paired with its top-choice $s$, a match that will persist until the algorithm concludes. Thus, by the end of the algorithm's execution, every $\ell$ will be matched with the $s$ it prefers the most, rendering the solution evidently stable.
\end{proof}

\subsection{SPP in an Inherently Conflicting System}

In this scenario, we examine SPP with a complete social constraints graph, where the graph includes all possible social constraints. The following illustration depicts this setup for $L=6$:

\vspace{-0.5cm} 
\begin{figure}[H]
  \centering
  \includegraphics[width=0.2\textwidth]{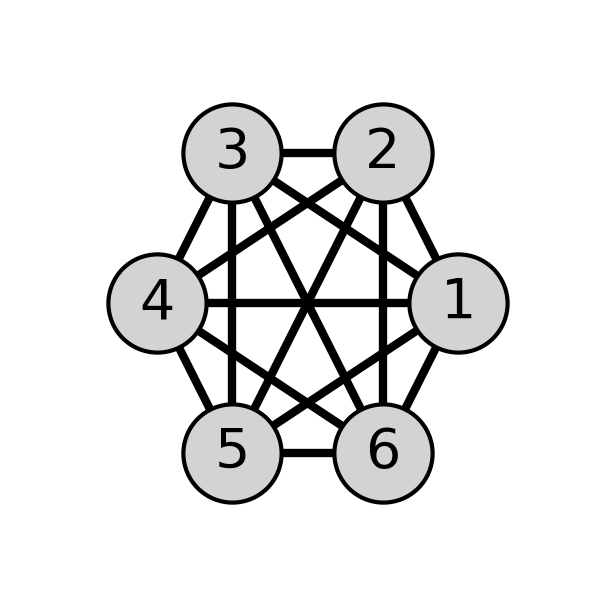}
  \caption{
  A sample visualization of a complete social constraints graph with 6 nodes ($L=6$).}
  \label{fig:complete_graph}
\end{figure}

In this context, complete graph indicates that no two individuals in $\ELL$ are compatible with each other. This scenario represents a situation with extremely high interference in communication networks, where each cell is within the interference range of all other cells.

\begin{theorem}
    The RP\&R algorithm reaches a stable solution for a complete graph within $L$ external iterations ($T=L$). 
\begin{proof}
To prove the theorem, we first show that in the context of a complete graph, if $s$ makes an offer to $\ell$ which is rejected, then $\ell$ will always reject any subsequent proposals from $s$. To establish this, assume, for contradiction, that $\ell$ rejects $s$ during the initial Re-Propose step but later accepts $s$ in a subsequent Re-Propose step.

Consider the earliest such instance, identifying the parties involved as $\ell_1$ and $s_1$. The premise for $\ell_1$'s initial rejection of $s_1$ must be $\ell_1$'s prior engagement with $s_2$, whom it favors over $s_1$. However, for $\ell_1$ to accept $s_1$ in a later iteration, it necessitates that $s_2$ must have dismissed $\ell_1$. This dismissal by $s_2$ could only arise if, during its Re-Propose step, $\ell_2$-who is preferred by $s_2$ over $\ell_1$-accepts $s_2$, whereas previously, since it is a complete graph, $\ell_2$ must have rejected $s_2$. This creates a paradox, suggesting that the situation cannot be the earliest instance of such an event, thereby proving that if $s$ makes an offer to $\ell$ which is turned down, then $\ell$ will always reject any subsequent proposals from $s$. We observe that if the constraint graph were not complete, the claim might not hold true. This is because the reason $\ell_2$ was not previously paired with $s_2$ may not necessarily be due to $\ell_2$'s refusal. it might also be that $s_2$ did not extend an offer to $\ell_2$ initially due to social constraints.

Next, we proceed to prove the theorem. Note that, in scenarios involving a complete graph, where no two individuals from $\ELL$ can coexist peacefully alongside the same individual from $\ES$, the problem essentially reduces to the classical SMP, except for differences in group sizes. Therefore, we apply stability results from SMP for unequal sets, as indicated in \cite{mcvitie1970stable}. The difference in applying the RP\&R algorithm with $T=L$ in this scenario, compared to the method suggested in \cite{mcvitie1970stable}, is that each individual $s$, during the Re-Propose step process, reviews their entire preference list from the start rather than resuming from their most recent selection. Despite this difference, the impact on the final outcome remains unchanged. This is because, based on the claim we proved above, if $s$ extends an offer to $\ell$ and is declined, $\ell$ will not accept any future proposals from $s$, ensuring that both algorithms ultimately facilitate identical pairings. Consequently, a stable solution exists, and RP\&R is guaranteed to reach it.
\end{proof}
\end{theorem}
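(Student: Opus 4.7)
The plan is to exploit the complete-graph assumption to reduce the problem to a classical stable marriage with unequal sets. Under a complete constraint graph, no two individuals of $\mathcal{L}$ can coexist under the same $s\in\mathcal{S}\setminus\{V\}$, so at every iteration each $s$ is effectively matched with at most one $\ell$. This collapses the polygamy structure onto a one-to-one shape, modulo the unequal set sizes $|\mathcal{L}|\ge|\mathcal{S}\setminus\{V\}|$, which is precisely the setting handled in \cite{mcvitie1970stable}.

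The first key step is a monotonicity lemma: once $\ell$ declines a proposal from $s$, $\ell$ will decline every future proposal from $s$. I would prove this by contradiction using a minimal counterexample. Pick the earliest pair $(\ell_1, s_1)$ for which $\ell_1$ first rejects $s_1$ but later accepts it. The initial rejection implies $\ell_1$ is paired with some $s_2$ preferred over $s_1$, and the later acceptance requires that at some intermediate moment the match $(\ell_1,s_2)$ dissolves. In the complete graph, the only way $s_2$ parts with $\ell_1$ is that $s_2$ re-proposes to some $\ell_2$ preferred over $\ell_1$ and is accepted; but at the first pass, the complete graph forced $\ell_2$ to decline $s_2$, so $(\ell_2,s_2)$ is an earlier rejection-then-acceptance pair, contradicting the minimality of $(\ell_1,s_1)$. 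It is important to highlight in the proof that completeness is essential: if the graph were not complete, the absence of an initial pairing between $\ell_2$ and $s_2$ could be due to social availability, not a rejection, and the recursion would not produce a strictly earlier witness.

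Given the monotonicity lemma, the RP\&R dynamics become equivalent to the McVitie--Wilson proposer-ordered variant of Gale--Shapley: each $s$ effectively walks down its preference list and never revisits a partner who has rejected it. Termination and stability for unequal set sizes then follow from the classical analysis in \cite{mcvitie1970stable}. Finally, I would bound the external-iteration count: since at most $L$ proposals can ever be accepted on behalf of any single $s$ before it settles, and each external iteration of RP\&R sweeps every $s$ through its preference list, $T=L$ external passes suffice for the process to stabilize.

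The main obstacle is the monotonicity lemma, because the natural inductive hypothesis (``$\ell$'s rejection of $s$ persists across one iteration'') does not chain cleanly under the Re-Propose semantics, where $s$ restarts from the top of its list each pass rather than continuing from its last offer. The minimal-counterexample framing circumvents this, but it requires care to separate rejection events from the distinct event of a forced dissolution triggered by loss of social availability. The remaining steps are essentially bookkeeping: reducing to SMP with unequal sets and invoking \cite{mcvitie1970stable} to conclude stability within $L$ passes.
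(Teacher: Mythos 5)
Your proposal is correct and follows essentially the same route as the paper: the same monotonicity lemma (a rejected proposer is never later accepted) proved via the same minimal-counterexample argument that traces the dissolution of $(\ell_1,s_2)$ back to an earlier rejection-then-acceptance pair $(\ell_2,s_2)$, followed by the same reduction to the McVitie--Wilson stable marriage with unequal sets. Your added remark on why completeness is essential and your explicit accounting for the $T=L$ bound match the paper's reasoning as well.
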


\subsection{SPP in a Disjoint Union of Complete Social Constraints Graphs}

Building on the results of the previous sections, we now consider a disjoint union of complete social constraints graphs. To prove that RP\&R solves the SPP in this context, we start by establishing the following lemma:

\begin{lemma}
     Each individual $\ell \in \ELL$ can only influence the algorithmic process and future pairings of other individuals within $\ELL$ to which it has a direct or indirect connection via the social constraints graph.
\end{lemma}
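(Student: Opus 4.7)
The plan is to prove the lemma via a coupling-style induction on the elementary operations of Algorithm \ref{alg:rpr}. Let $H \subseteq \mathcal{L}$ be the connected component of $\ell$ in the social constraints graph $G = (\mathcal{L}, \mathcal{E})$, and set $H^c = \mathcal{L} \setminus H$. The central observation is a closure property of $H^c$: for every $\ell' \in H^c$, each neighbor of $\ell'$ in $G$ lies again in $H^c$, so the only individuals that can be socially incompatible with $\ell'$ belong to $H^c$.

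I would compare two parallel executions of RP\&R that agree on $\{R^\mathcal{L}_{\ell'',\cdot}\}_{\ell'' \neq \ell}$, on all entries of $R^\mathcal{S}$ outside $\ell$'s row, and on $\{\mathcal{C}_{\ell''}\}_{\ell'' \neq \ell}$, but may differ arbitrarily in $\ell$'s preferences, in $\ell$'s row of $R^\mathcal{S}$, in $\mathcal{C}_\ell$, or even in whether $\ell$ is present at all. Since each column of $R^\mathcal{S}$ is merely a ranking, the \emph{relative} order of the members of $H^c$ inside every such column coincides across the two runs, so the subsequence of events ``$s$ considers some $\ell' \in H^c$'' is generated in the same order in both executions. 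The induction hypothesis is that, after each such $H^c$-event, the restriction $\phi|_{H^c}$ is identical in the two runs; any $H$-events interleaved in between touch only $\phi|_{H}$ and preserve $\phi|_{H^c}$ trivially.

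For the inductive step, when $s$ considers $\ell' \in H^c$, the social-availability predicate inspects only those individuals currently matched to $s$ who are preferred by $s$ over $\ell'$ \emph{and} socially incompatible with $\ell'$; by the closure property, all such individuals lie in $H^c$, so the predicate queries only $\phi|_{H^c}$ and the $H^c$-restriction of $R^\mathcal{S}$. The acceptance test $R^\mathcal{L}_{\ell',s} \leq R^\mathcal{L}_{\ell',\phi(\ell')}$ and the eviction branch (lines 9--10) likewise depend only on $\ell'$'s preferences and on $\phi(\ell') \in H^c$, all unchanged between the runs. Hence the update to $\phi(\ell')$ agrees across the two runs, closing the induction and yielding the lemma by contrapositive: altering anything about $\ell$ leaves the trajectory of every $\ell' \in H^c$ intact. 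The main obstacle I anticipate is articulating precisely that every predicate inside the innermost loop touches only information local to $H^c$ whenever the candidate lies in $H^c$; this ultimately reduces to the defining closure of connected components under adjacency and is therefore conceptual rather than computational.
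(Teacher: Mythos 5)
Your proposal is correct and is essentially the paper's own argument: both compare the execution on the given instance with a coupled execution in which $\ell$'s data (or $\ell$ itself) is altered, and both rest on the observation that every predicate the algorithm evaluates for a candidate outside $\ell$'s connected component reads only state local to that complement. The paper phrases this as a proof by contradiction that traces backward to the first differing match inside the unaffected component, whereas you run the equivalent forward induction on the event sequence --- making explicit the closure of the complement under adjacency, the preservation of the relative preference order on $H^c$, and the fact that $H$-events never touch $\phi|_{H^c}$ --- which is a more rigorous rendering of the same idea.
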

\begin{proof}
    Assume, for contradiction, that the theorem is false. This implies there exists an instance $I_1$ of the SPP in which there are individuals $\ell_1$ and $\ell_2$ who belong to different connected components in the graph of social constraints, yet $\ell_1$ affects the algorithmic process of $\ell_2$. In other words, consider another instance $I_2$ of the problem, which is identical to $I_1$ in terms of preference matrices and the graph of social constraints, except that $\ell_1$ does not exist in $I_2$. According to the assumption, the sequence of individuals that $\ell_2$ matches with during the algorithm in $I_1$ is different from the sequence in $I_2$. Let us denote the individuals in $\ES$ (in $I_1$, and $I_2$) in the first differing match in these sequences as $s_1$ and $s_2$.
    The difference in matches arises because, without loss of generality, $s_2$ matches with $\ell_3$, which is socially incompatible with $\ell_2$ in $I_2$, whereas in $I_1$, $s_2$ does not match to $\ell_3$. Now, we need to recursively examine why this difference occurs by tracing the sequence of matches within the connected component of $\ell_2$ until eventually, we reach a point where there is a difference in matching for an individual $\ell''$ within the connected component of $\ell_2$. This implies $\ell''$ is the first individual in the connected component of $\ell_2$ where the matches differ between $I_1$ and $I_2$. However, since the graph of social constraints is identical in this connected component for both instances (and the only difference is in the component of $\ell_1$), there cannot be an additional individual $\ell$ causing this discrepancy. This leads to a contradiction of our initial assumption, which completes the proof.
\end{proof}

Building on the lemma, we now establish the following theorem:

\begin{theorem}
The RP\&R algorithm reaches a stable solution for a disjoint union of complete social constraints graphs within $L_{max}$ external iterations ($T=L_{max}$), where $L_{max}$ represents the largest number of nodes in any of the complete graphs.
    \begin{proof}
        Let $n$ be the number of complete subgraphs, and $\ELL_1, \ELL_2, ..., \ELL_n$ denote the subsets of $\ELL$ corresponding to these subgraphs. Since there are no paths connecting individuals between subgraphs, RP\&R processes these subsets independently. This separation ensures that the algorithm's operations on one subset do not impact the others, allowing for parallel and independent execution across all subgraphs. Given that RP\&R reaches stable solutions within each complete social constraints graph, and by treating these subsets as isolated cases based on the lemma, RP\&R achieves stability across the entire set $\ELL$. Since $L_i$ external iterations are required to achieve stability for each subgraph $\ELL_i$, RP\&R reaches a stable solution within $L_{max}$ external iterations.
    \end{proof}
\end{theorem}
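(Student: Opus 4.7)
The plan is to reduce the disjoint-union case to the single complete-graph case already handled, using the preceding Lemma to justify that the analysis on each connected component can be carried out in isolation. Let the complete subgraphs have vertex sets $\mathcal{L}_1,\ldots,\mathcal{L}_n$ with sizes $L_1,\ldots,L_n$, and set $L_{max}=\max_i L_i$.

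The first step is to promote the Lemma from ``removing one individual from another component'' to ``restricting attention to one component.'' Iterating the Lemma, one vertex at a time, across all of $\mathcal{L}\setminus\mathcal{L}_i$ shows that the trajectory of every $\ell\in\mathcal{L}_i$ under RP\&R on the full input coincides with its trajectory under RP\&R on the instance obtained by deleting all other components. The only subtlety to verify is that the social-availability test on line 5 for a pair $(s,\ell)$ with $\ell\in\mathcal{L}_i$ only depends on the current matches of neighbors of $\ell$, which all lie inside $\mathcal{L}_i$; hence the test is insensitive to what happens in other components. The preference lines themselves make no reference to cross-component individuals in any decision that affects $\ell$, so the trajectories agree.

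Given this decoupling, I would apply the complete-graph theorem (the previous theorem) to each $\mathcal{L}_i$ separately: RP\&R stabilizes $\mathcal{L}_i$ within $L_i$ external iterations. Because all $n$ components advance in lockstep during each outer iteration and their evolutions are independent, running the algorithm for $T=L_{max}$ external iterations simultaneously stabilizes every component. To finish, I would observe that once a component is stable it remains stable under further iterations: within a complete subgraph the earlier proof shows that any $\ell$ who once rejected $s$ will always reject $s$, so no pairing inside a stabilized component can be dissolved by later Re-Propose steps.

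The main obstacle I anticipate is precisely the promotion of the Lemma described above: writing out carefully that removing an entire component is the same as iteratively removing its vertices, and that line 5's social-availability check truly depends only on the local neighborhood. Once that decoupling is nailed down, the rest is a routine combination of the complete-graph result and taking the maximum over components to get the $L_{max}$ bound.
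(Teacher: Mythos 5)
Your proposal follows essentially the same route as the paper: invoke the preceding lemma to decouple the connected components, apply the complete-graph theorem to each component $\mathcal{L}_i$ to get stability within $L_i$ external iterations, and take the maximum to obtain the $L_{max}$ bound. The extra care you take --- iterating the lemma vertex-by-vertex to justify deleting an entire component, checking that the social-availability test in line 5 is local to $\ell$'s neighborhood, and noting that a stabilized component stays stable under further iterations --- only makes explicit steps the paper's proof leaves implicit.
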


\subsection{SPP in an Acyclic Social Constraints Graphs}

Finally, we highlight an intriguing case of acyclic social constraints graphs. We conducted extensive simulations, exceeding one million random trials, to evaluate the algorithm's performance on acyclic graphs. In all simulations, the algorithm successfully achieved stable polygamy within $T=L$ iterations. Proving this case analytically remains an open problem.

\subsection{Centralized Implementation of RP\&R for Spectrum Access with Channel Reuse}

In the preference ranking model, each cell $\ell \in \mathcal{L}$ can rank the channels they wish to access based on the utility (e.g., rate) they achieve on each channel. Meanwhile, a control manager ranks the preferred users on each channel based on factors such as user characteristics and the potential interference a cell might cause to external usage of the channel or channel operations. Implementing RP\&R to solve SPP with preference ranking for spectrum access with channel reuse can be done by the controller that runs the algorithm, and inform the cells the resulting matches. This mechanism offers centralized implementations with low complexity, making it suitable for integration into 5G and beyond technologies with centralized node deployments.

\section{Simulation Results}
\label{sec:case}

In this section, we compare the proposed DSSAR and RP\&R algorithms under common utility and preference ranking settings, respectively, with well-known benchmarks: (i) \emph{Random Matching:} This algorithm iteratively selects random pairs $(s, \ell)$ from all possible combinations, updating the potential pairs based on $\ell$'s  social constraints. Specifically, it removes from the potential pairs any pair that includes the selected $s$ and $\ell'$ that is socially incompatible with $\ell$. (ii) \emph{Best of Random:} This algorithm runs the Random Matching process multiple times (in our setting, $L$ times) and selects the match that yields the highest social welfare. In the context of the common utility setting, Best of Random serves only as a benchmark since, unlike DSSAR, it cannot be executed in a distributed manner. (iii) \emph{Top-Ranked Proposal:} In this algorithm, each $\ell$ proposes to its most preferred $s$. Each $s$ then accepts proposals based on its preference order, adhering to social constraints by eliminating offers from individuals in $\ELL$ who are socially incompatible with those it has already matched with. (iv) \emph{Optimal Polygamy:} For small systems where exhaustive search is feasible within a reasonable timeframe, we computed the optimal polygamy by implementing an exhaustive search. Specifically, we conducted an exhaustive search over all $L^{(S+1)}$ possible matches, accounting for all potential pairings between $\mathcal{L}$ and $\mathcal{S}$, including cases where $\ell$ remains unmatched. For each matching, we checked if it satisfied the criteria for harmonious polygamy and calculated the corresponding social welfare. Due to the exponential complexity of this process, we limited our experiments to cases where $L$ ranged from $3$ to $9$ and $S$ ranged from $2$ to $3$, with random selection for each experiment.

We evaluated the algorithms under both common utility and preference ranking settings. In the common utility setting, a cell-channel $(\ell, s)$ match was assessed based on the data rate computed using Shannon capacity. For the preference ranking setting, three measures were computed: First, $\mathcal{S}$'s social welfare, where each match between $s$ and $\ell$ increased welfare according to $\ell$'s ranking in $s$'s preferences (e.g., if $\ell$ was $s$'s top choice, welfare increased by $L$; if second, by $L-1$, and so on); second, $\mathcal{L}$'s social welfare, defined similarly but based on $s$'s ranking in $\ell$'s preferences; and finally, the total welfare, which was the average of $\mathcal{S}$'s and $\mathcal{L}$'s welfare. Each individual's preferences are randomly selected from all possible permutations, and the social constraint graph is modeled as a random geometric graph.

We conducted $10,000$ experiments, with each experiment generating a social constraints graph, a utility matrix, and preference rankings as previously described. The results of these experiments are presented in Table \ref{tab: small L} and Figures \ref{fig:util, small L} and \ref{fig:pref, small L}. On average, the RP\&R algorithm achieved $96\%$ of the optimal social welfare in the preference ranking simulations and $97\%$ in the common utility simulations. These results are comparable to the performance observed in the special case of SMP for spectrum access \cite{leshem2011multichannel}. Notably, RP\&R significantly outperforms all other methods.

\begin{table}[H]
    \centering
    \resizebox{\columnwidth}{!}{ 
    \begin{tabular}{lccc}
        \toprule
        & \makecell{\textbf{Total Welfare}}  
        & \makecell{\textbf{$S$'s social Welfare}} 
        & \makecell{\textbf{$L$'s social welfare}}\\
        \midrule
        Random Matching & 0.472 & 0.436 & 0.508 \\
        Top-Ranked Proposal & 0.511 & 0.423 & 0.598 \\
         Best-of-Random & 0.554 & 0.514 & 0.594 \\
        \textbf{RP\&R (ours)} & \textbf{0.582} & \textbf{0.557} & \textbf{0.607}\\
        \midrule
         Optimal Polygamy & 0.606 & 0.561 & 0.651 \\ 
        \bottomrule
    \end{tabular}
    }
    \caption{The results of preference ranking for a small network across different algorithms, averaged over 10,000 random experiments. In each experiment, $L$ was randomly selected between $3$ and $9$, and $S$ was chosen between $2$ and $3$.}
    \label{tab: small L}
\end{table}

\begin{table}[H]
    \centering
    \resizebox{\columnwidth}{!}{ 
    \begin{tabular}{lccc}
        \toprule
        & \makecell{\textbf{Total Welfare}}  
        & \makecell{\textbf{$S$'s social Welfare}} 
        & \makecell{\textbf{$L$'s social welfare}}\\
        \midrule
        Random Matching & 0.321 & 0.301 & 0.342 \\
        Top-Ranked Proposal & 0.358 & 0.272 & 0.445 \\
        Best-of-Random & 0.371 & 0.347 & 0.395 \\
        \textbf{Proposed RP\&R (ours)} & \textbf{0.445} & \textbf{0.522} & \textbf{0.478}\\
        \bottomrule
    \end{tabular}
    }
    \caption{The results of preference ranking for a larger network across different algorithms, averaged over 10,000 random experiments. In each experiment, $L$ was randomly selected between $20$ and $100$, and $S$ was chosen between $2$ and $10$.}
    \label{tab: big L}
\end{table}

\begin{figure}[!ht]
    \centering
    \includegraphics[width=0.9\linewidth]{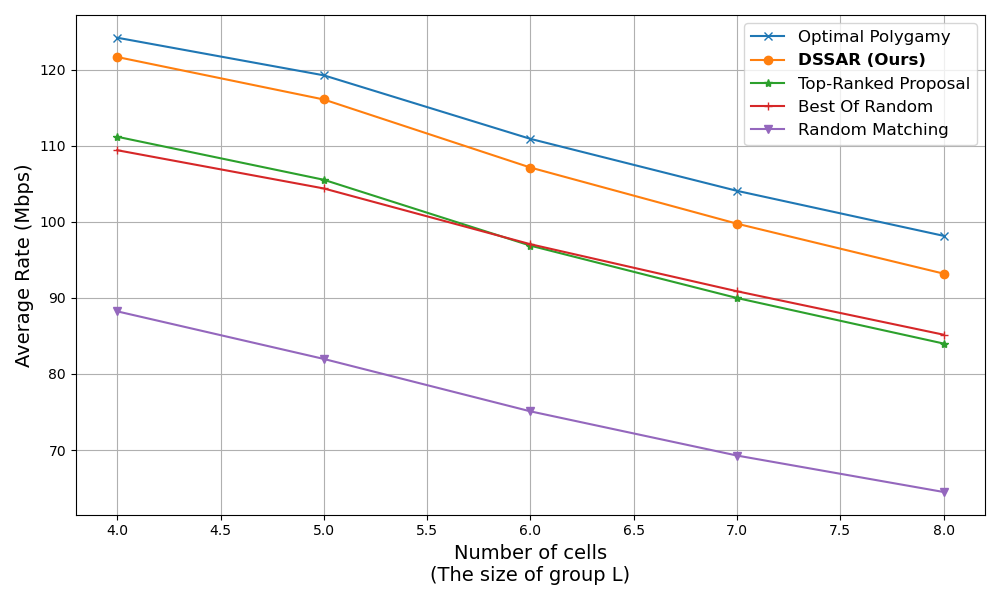}
    \caption{The results for common utility for a small network, measured by data rate as a function of the number of cells ($L$), averaged across 10,000 random experiments for various algorithms.}
    \label{fig:util, small L}
\end{figure}

\begin{figure}[!ht]
    \centering
    \includegraphics[width=0.9\linewidth]{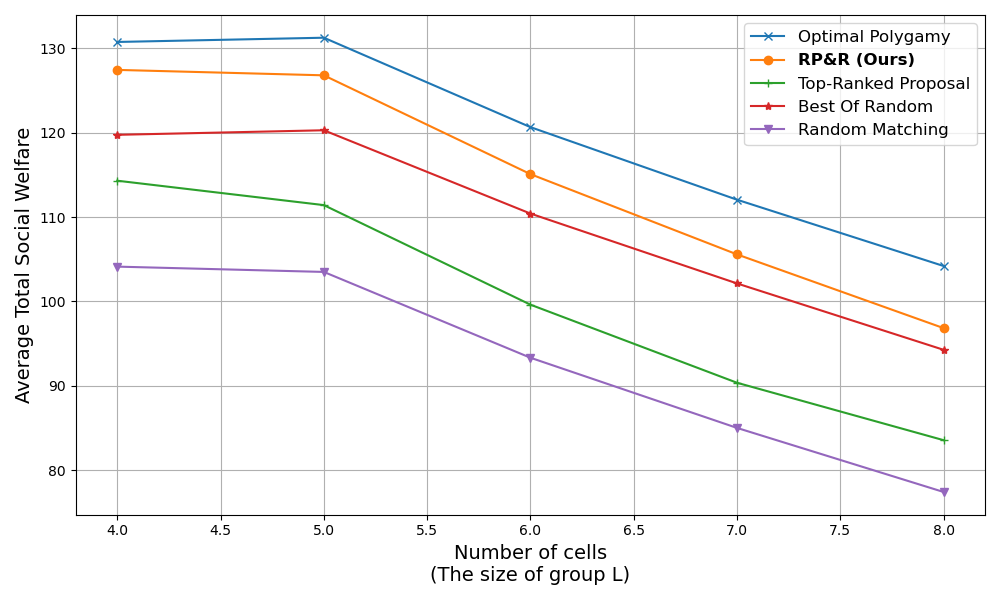}
    \caption{The results for total social welfare for a small network as a function of the number of cells ($L$), averaged across 10,000 random experiments for various algorithms.}
    \label{fig:pref, small L}
\end{figure}

\begin{figure}[h!]
    \centering
    \includegraphics[width=0.9\linewidth]{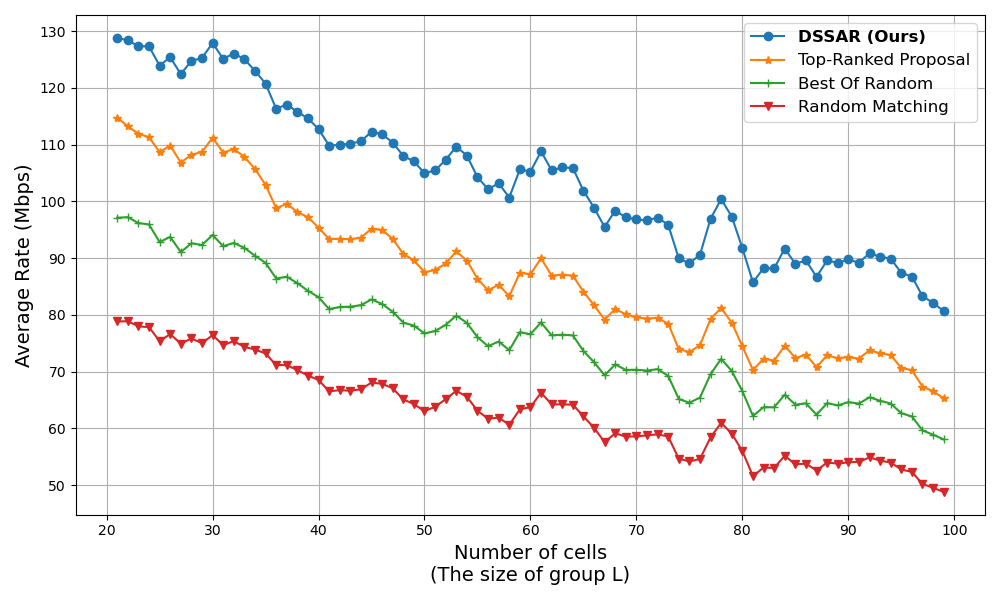}
    \caption{The results for common utility for a larger network, measured by data rate as a function of the number of cells ($L$), averaged across 10,000 random experiments for various algorithms.}
    \label{fig:util, big L}
\end{figure}

\begin{figure}[h!]
    \centering
    \includegraphics[width=0.9\linewidth]{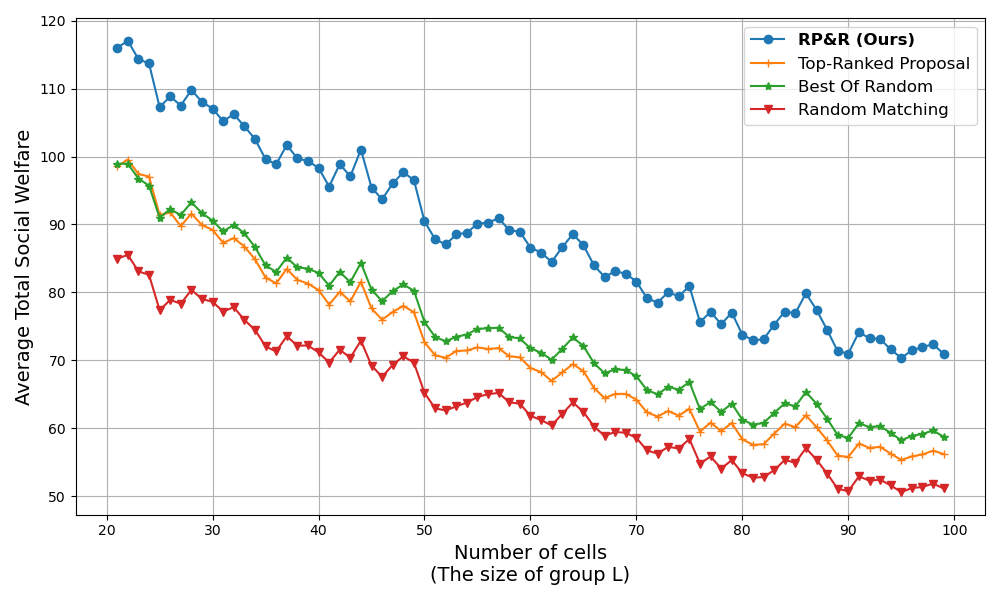}
    \caption{The results for total social welfare for a larger network as a function of the number of cells ($L$), averaged across 10,000 random experiments for various algorithms.}
    \label{fig:pref, big L}
\end{figure}

\begin{figure}[h!]
    \centering
    \includegraphics[width=0.9\linewidth]{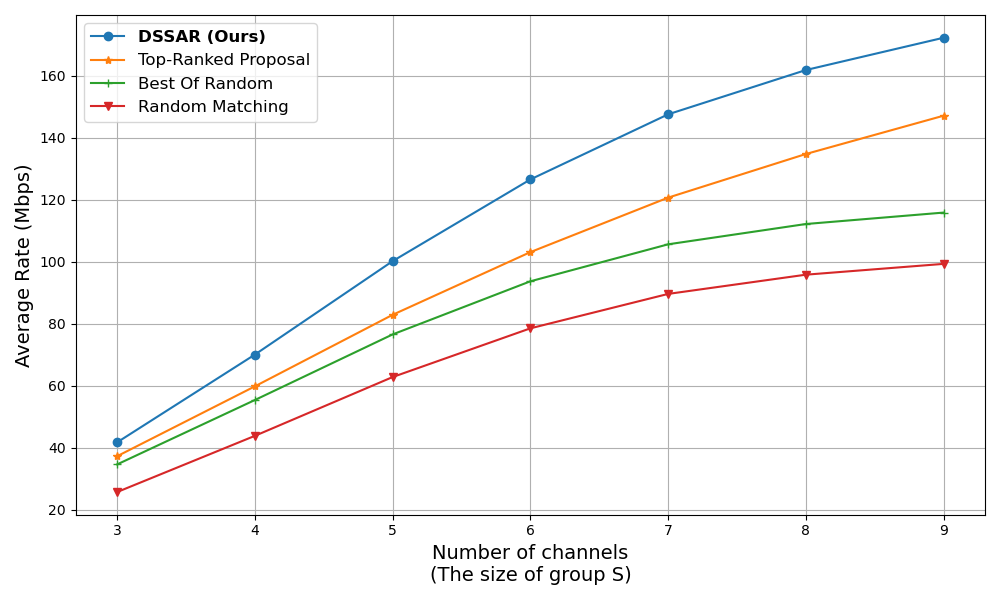}
    \caption{The results for common utility for a larger network, measured by data rate as a function of the number of channels ($S$), averaged across 10,000 random experiments for various algorithms.}
    \label{fig:util, big S}
\end{figure}

\begin{figure}[h!]
    \centering
    \includegraphics[width=0.9\linewidth]{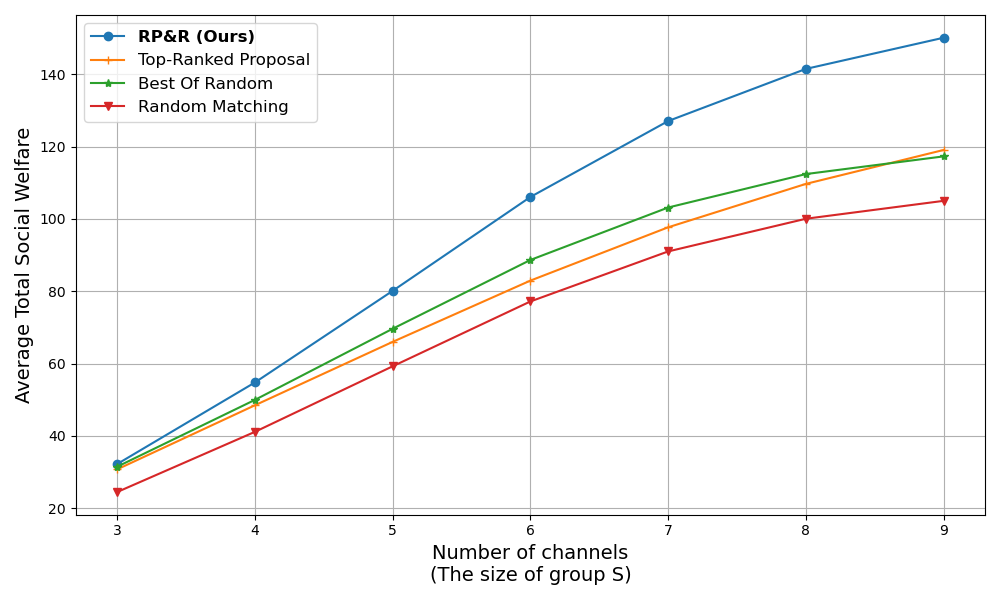}
    \caption{The results for total social welfare for a larger network as a function of the number of channels ($S$), averaged across 10,000 random experiments for various algorithms.}
    \label{fig:pref, big S}
\end{figure}

We also evaluated the scalability of our method compared to other methods by testing larger numbers of channels ($S$) and cells ($L$), as detailed in Table \ref{tab: big L} and Figures \ref{fig:util, big L}-\ref{fig:pref, big S}. For this, we conducted an additional $10,000$ experiments using the same parameters as before, but with $S$ randomly varying between $2$ and $9$ and $L$ ranging from $20$ to $100$. It is important to note that for such large values of $L$, performing an exhaustive search for the optimal polygamy becomes impractical due to the associated complexity. Notably, DSSAR in the common utility setting and RP\&R in the preference ranking setting both significantly outperform all other methods.

\section{Conclusion}
\label{sec:conclusion}

In this paper, we introduced and analyzed the stable polygamy problem (SPP), a new and broader formulation of the stable marriage problem (SMP). Unlike the classic SMP, which only allows one-to-one matching, the SPP accommodated multiple individuals from a larger group being matched with a single individual from a smaller group, while incorporating social constraints that dictate which individuals cannot coexist harmoniously. We demonstrated that traditional algorithms like propose and reject and the Hungarian method are inefficient in this new setting. To address this, we developed efficient algorithms that solve the SPP in polynomial time for spectrum access with channel reuse. Our analysis and simulations confirmed the effectiveness of these algorithms in various spectrum access scenarios.

\bibliographystyle{ieeetr}

\end{document}